\newlength\figH
\newlength\figW
\def\BibTeX{{\rm B\kern-.05em{\sc i\kern-.025em b}\kern-.08em
    T\kern-.1667em\lower.7ex\hbox{E}\kern-.125emX}}
\newcommand*{\C}{\mathbb{C}}
\newcommand*{\R}{\mathbb{R}}
\newcommand*{\N}{\mathbb{N}}
\DeclareMathOperator{\diff}{d}
\DeclareMathOperator{\diag}{diag}
\newcommand{\ddt}{\tfrac{\diff}{\diff \!t}}
\definecolor{light-gray}{gray}{0.96}
\newtheorem{theorem}{Theorem}
\newtheorem{lemma}{Lemma}
\newtheorem{condition}{Condition}
\DeclareSymbolFontAlphabet{\mathbb}{AMSb}
\DeclareSymbolFontAlphabet{\mathbbl}{bbold}
\xpatchcmd{\@thm}{\thm@headpunct{.}}{\thm@headpunct{}}{}{}
\let\MYcaption\@makecaption
\let\@makecaption\MYcaption
\begin{document}
\thispagestyle{plain}
\pagestyle{plain}

\title{{\fontsize{16}{19}\selectfont \textbf{Frequency Domain Stability Conditions for Hybrid AC/DC Systems}}}

\author{\IEEEauthorblockN{{\fontsize{12}{14}\selectfont Dahlia Saba and Dominic Gro\ss}}
\IEEEauthorblockA{\textit{\fontsize{10}{12}\selectfont Department of Electrical and Computer Engineering} \\
\textit{\fontsize{10}{12}\selectfont University of Wisconsin--Madison}\\
\fontsize{10}{12}\selectfont Madison, WI, USA \\
\fontsize{10}{12}\selectfont \{dsaba, dominic.gross\}@wisc.edu
 \thanks{This material is based upon work supported by the National Science Foundation under Grant No. 2143188.}}}

\maketitle

\begin{abstract}
    In this article, we investigate small-signal frequency and DC voltage stability of hybrid AC/DC power systems that combine AC and DC transmission, conventional machine-based generation, and converter-interfaced generation. The main contributions of this work are a compact frequency domain representation of hybrid AC/DC systems and associated stability conditions that can be divided into conditions on the individual bus dynamics and conditions on each DC network. The bus-level conditions apply to a wide range of technologies (e.g., synchronous generators, synchronous condensers, grid-forming renewables and energy storage). Moreover, the system-level conditions establish that hybrid AC/DC systems combining a wide range of devices are stable independently of the network topology provided that the frequency response of converters on each DC network is sufficiently coherent relative to the network coupling strength. Additionally, we develop and validate a novel reduced-order damper winding model for multi-machine systems.
\end{abstract}
\begin{IEEEkeywords}
    Power system stability, hybrid AC/DC systems, grid-forming control.
\end{IEEEkeywords}

\section{Introduction}
In an effort to decarbonize power systems and the wider economy, bulk power systems are rapidly transitioning from conventional fuel-based generation interfaced by synchronous machines to emerging technologies interfaced by power electronic converters such as renewable generation~\cite{KJZ+2017}, energy storage, and high voltage DC (HVDC) transmission~\cite{WLB2015}. However,  replacing conventional machine-interfaced generation and AC transmission with converter-interfaced resources, storage, and transmission results in significantly different power system dynamics and challenges standard operating and analysis paradigms~\cite{WLB2015,KJZ+2017,MDH2018}. For example, a widely recognized concern is the loss of synchronous generators that stabilize power systems through their physical properties (e.g., rotational inertia) and controls (e.g., speed governor). In contrast, today's converter-interfaced renewables are often controlled to maximize energy yield and can jeopardize system reliability.

The vast majority of power converters connected to bulk power systems today use so-called grid-following (GFL) control that assumes a stable AC voltage waveform (i.e., frequency and magnitude) at their point of interconnection. While GFL converters can provide grid support services, dynamic stability of the bulk power system can rapidly deteriorate as the share of GFL resources increases~\cite{MBP+2019}. In contrast, grid-forming (GFM) converters are controlled as stable and self-synchronizing voltage sources that can mimic the stabilizing response of synchronous generators. Grid-forming controls commonly discussed in the literature include droop control~\cite{CDA1993}, virtual synchronous machine (VSM) control~\cite{DARCO2015180}, (dispatchable) virtual oscillator control (dVOC)~\cite{JDH2014,dVOC}, machine emulation~\cite{7236454} and dual-port GFM control~\cite{universaldualport}. However, droop control, VSM, and dVOC may destabilize the system if, e.g., the resource interfaced by the converter reaches its
power generation limits or if the topology or loading of HVDC networks changes~\cite{singleportunstable}. Overall, stability analysis should consider heterogeneous controls, converter DC bus dynamics, including power generation resources and DC transmission, and legacy generation (i.e., synchronous generators).

Stability analysis methods in the literature can be broadly categorized into large-signal~\cite{foundations_chiang_1987, transient_fouad_1988, synchronization_dorfler_2012, SCHIFFER20142457, timescale_subotic_2021,large-signal_pal_2023} and small-signal~\cite{input-admittance_harnefors_2007,PM2019,universaldualport,huang_gain_2024} analysis methods. For instance, energy-functions have been used to establish large-signal stability of multi-machine systems~\cite{foundations_chiang_1987, transient_fouad_1988,large-signal_pal_2023} and to derive analytic AC stability conditions for a single GFL converter~\cite{large-signal_pal_2023}. However, at present, these methods are not scalable to large-scale networks of heterogeneous devices.

Small-signal analysis using impedance models~\cite{input-admittance_harnefors_2007, sun_impedance_2011,wang_small-signal_2024,zhang_impedance_2021,zhang_harmonic_2022,huang_gain_2024,arevalo-soler_small-signal_2025} and eigenvalue sensitivity analysis~\cite{HAROLARRODE2020105746,arevalo-soler_small-signal_2025} have proven useful to analyze harmonic stability and interactions of various converter control loops. In particular, by ensuring that the converter admittance is passive in ranges where the network may be vulnerable to resonances, converters can help stabilize the system~\cite{harnefors_passivity-based_2016}. These methods generally require detailed knowledge of the network and converter topology and numerical analysis may not be tractable for large systems. A notable exception are the decentralized impedance-based conditions for AC systems explored in~\cite{huang_gain_2024}. Impedance analysis has also been used extensively to study the impact of inner control loops~\cite{input-admittance_harnefors_2007, sun_impedance_2011, harnefors_passivity-based_2016, huang_gain_2024, wang_small-signal_2024,zhang_impedance_2021,zhang_harmonic_2022, HAROLARRODE2020105746}.

Conceptually, the hybrid AC/DC systems considered in this work can be represented and analyzed using hybrid AC/DC impedance models~\cite{zhang_impedance_2021,zhang_harmonic_2022,wang_small-signal_2024}. However, the literature on hybrid AC/DC impedance analysis typically only considers simplified network topologies such as (i) a single interlinking converter  connecting an infinite bus (i.e., AC system) to a DC impedance (i.e., DC system)~\cite{wang_small-signal_2024}, or (ii) two AC networks (represented as infinite bus) connected via a point-to-point HVDC line~\cite{zhang_impedance_2021}. In~\cite{zhang_harmonic_2022}, a multi-terminal HVDC system is considered; however, this work neglects the dynamics of AC generation. Moreover,~\cite{wang_small-signal_2024,zhang_impedance_2021,zhang_harmonic_2022} restrict the converter to GFL control.

In contrast, the focus of this work are scalable analytical stability conditions that do not restrict the network topology or converter control a-priori. Justified by standard timescale separation arguments (see~\cite{timescale_subotic_2021} and references therein), we focus on the frequency and DC voltage dynamics of hybrid AC/DC systems that arise from interactions of converter outer control loops, resource dynamics, and synchronous generators through AC and DC networks. Conceptually, this work builds on analytical methods for small-signal stability analysis that are often are restricted to AC networks with homogeneous bus dynamics~\cite{paganini_global_2020} or GFM converters with identical controls. A notable exception are the conditions in \cite{PM2019} that characterize a general class of bus (e.g., generators and converters) transfer functions to ensure frequency stability of AC networks independently of the topology. However, \cite{PM2019} requires that individual bus transfer functions are $\mathscr{H}_\infty$ stable and thereby excludes, e.g., synchronous condensers, GFM STATCOMS, and renewables operating at their maximum power point (MPP). Moreover, the results do not extend to hybrid AC/DC power systems. In contrast, \cite{universaldualport} presents analytical stability conditions for hybrid AC/DC power systems containing converters and synchronous machines that allow for synchronous condensers and renewables at the MPP. However, the results in \cite{universaldualport} only apply to converters using dual-port GFM control and restrict the set of allowable AC network topologies.

The main contribution of this work is a compact frequency domain representation of hybrid AC/DC systems and associated stability conditions that can incorporate common technologies such as synchronous generators, synchronous condensers, HVDC transmission, and common GFM controls. We first develop an abstract small-signal modeling framework that models a wide range of emerging and conventional technologies. Next, we develop and validate a novel reduced-order model of damper windings that captures their synchronizing effect in multi-machine systems. This model is crucial to incorporate synchronous condensers into our stability analysis framework and is validated using EMT simulations.

Leveraging this modeling framework we derive analytical stability conditions that can be divided into bus-level and system-level conditions. The bus-level conditions allow for marginally stable bus dynamics, thereby allowing us to consider synchronous condensers and renewables operating at their MPP. Moreover, the system-level conditions only require knowledge of the strongest and weakest coupling in the grid but do not require knowledge of the precise system topology. Thus, our conditions are both inherently scalable and can capture a wide range of generation technologies that cannot be analyzed using existing frameworks.

This manuscript is structured as follows. We briefly review small-signal models of common power generation, conversion, and transmission technologies in Sec.~\ref{sec:modeling}. Our reduced-order damper winding model is introduced and validated in Sec.~\ref{sec:dw}. Next, Sec.~\ref{sec:inter} introduces the small-signal model of the overall hybrid AC/DC system and Sec.~\ref{sec:stability} develops our analytical stability conditions. Finally, Sec.~\ref{sec:applicationexample} discusses the application of the analytical stability conditions to common applications and Sec.~\ref{sec:conclusion} provides conclusions and directions for future work.

\section{Review of Power System Models}\label{sec:modeling}
This section reviews common reduced-order models of power systems, including models of transmission networks, of power converters and converter-interfaced generation, and of synchronous machines and mechanical power generation. 

\subsection{AC and DC Transmission Network Model}\label{subsec:acanddctransmission}
The transmission system is modeled as a Kron-reduced graph~\cite{kron}, where the $n \in \N$  nodes represent buses with converters or synchronous machines and the edges represent AC and DC lines. Moreover, $\mathcal{N}_{\text{ac}} \subseteq \{1, \dots, n\}$ and $\mathcal{N}_{\text{dc}} \subseteq \{1, \dots, n\}$ represent the sets of nodes connected to AC lines and DC lines, respectively. Finally, $\mathcal{E}_{\text{ac}} \subseteq \mathcal{N}_{\text{ac}} \times \mathcal{N}_{\text{ac}}$ and $\mathcal{E}_{\text{dc}} \subseteq \mathcal{N}_{\text{dc}} \times \mathcal{N}_{\text{dc}}$ represent the sets of AC and DC lines.  

It is assumed that the AC voltage magnitude for every bus is constant at $1$~p.u., all AC lines are lossless, and reactive power flows do not impact the AC phase angles. Linearizing the AC powerflow equations around the no load operating point, the power $p^{\text{ac}}_{\text{net,}i}$ injected at bus $i \in \mathcal{N}_{\text{ac}}$ into the AC network is
\begin{align*}
    p^{\text{ac}}_{\text{net,}i} = \sum\nolimits_{(i,j)\in \mathcal{E}_{\text{ac}}} b^{\text{ac}}_{ij}(\theta_i - \theta_j),
\end{align*}
where $b^{\text{ac}}_{ij} \in \R$ represents the susceptance of the AC line $(i,j)$ and $\theta_i  \in \R$ represents the AC voltage phase angle of bus $i \in \mathcal{N}_{\text{ac}}$. This equation can be written in vector form as 
\begin{align}
    \tilde{P}^{\text{ac}}_{\text{net}} = \tilde{L}_{\text{ac}}\theta, 
\end{align}
where $\tilde{P}^{\text{ac}}_{\text{net}} \in \R^{|\mathcal{N}_{\text{ac}}|}$ is the vector of power injections into the AC network from each node, $\theta \in \R^{|\mathcal{N}_{\text{ac}}|}$ is the vector of AC voltage phase angles, and $\tilde{L}_{\text{ac}} \in \R^{|\mathcal{N}_{\text{ac}}| \times |\mathcal{N}_{\text{ac}}|}$ is a Laplacian matrix defined in Appendix~\ref{app:lap}.

Similarly, linearizing the powerflow in the DC network at the nominal DC voltage results in the power  
\begin{align*}
    p^{\text{dc}}_{\text{net,}i} = \sum\nolimits_{(i,j)\in \mathcal{E}_{\text{dc}}} g^{\text{dc}}_{ij}(v_i - v_j),
\end{align*}
injected into the DC grid at bus $i \in \mathcal{N}_{\text{dc}}$. Here, $g^{\text{dc}}_{ij} \in \R_{\geq 0}$ denotes the conductance of DC line $(i,j)$ and $v_i \in \R$ represents the DC voltage deviation at bus $i \in \mathcal{N}_{\text{dc}}$. This equation can be written in vector form as 
\begin{align}
    \tilde{P}^{\text{dc}}_{\text{net}} = \tilde{L}_{\text{dc}}v, 
\end{align}
where $\tilde{P}^{\text{dc}}_{\text{net}} \in \R^{|\mathcal{N}_{\text{dc}}|}$ is the vector of power injections into the AC network from each node, $v \in \R^{|\mathcal{N}_{\text{dc}}|}$ is the vector of DC voltages, and $\tilde{L}_{\text{dc}} \in \R^{|\mathcal{N}_{\text{dc}}| \times |\mathcal{N}_{\text{dc}}|}$ is the Laplacian matrix of the DC network defined in Appendix~\ref{app:lap}.

\subsection{DC Generation, Power Converters, and Controls} 
DC generation and converters will be modeled as a set of transfer functions from the total (i.e., AC and DC) converter power injection to the AC frequency and DC voltage at the converter bus (Fig.~\ref{fig:dc_blockdiag}). Any DC generation is assumed to located at a bus with a converter.
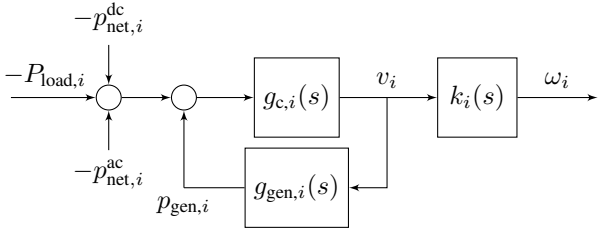
\begin{figure}[t!]
    \centering
    \tikzset{
block/.style = {draw, fill=white, rectangle, minimum height=3em, minimum width=3em},
tmp/.style  = {coordinate}, 
sum/.style= {draw, fill=white, circle, node distance=1cm},
input/.style = {coordinate},
output/.style= {coordinate,node distance=1cm},
pinstyle/.style = {pin edge={to-,thin,black}
}
}

\begin{tikzpicture}[auto, node distance=2cm,>=latex']
    \node [input, name=rinput] (rinput) {};
    \node [sum, right of=rinput, node distance=1.3cm] (sum1) {};
    \node [input, below of=sum1, node distance=0.7cm] (pac_net) {};
    \node [input, above of=sum1, node distance=0.7cm] (pdc_net) {};
    \node [sum, right of =sum1] (sum2) {};
    
    \node [block, right of=sum2, node distance=1.5cm] (controller) {$g_{\text{c,}i}(s)$};
          
   \node [output,right of=controller, node distance=1.2cm] (vout) {};    

    \node [block, below of=controller,node distance=1.2cm] (DCgen) {$g_{\text{gen,}i}(s)$};    
        \node [output, left of=DCgen, node distance=1.5cm] (pgen) {};
    
    \node [block, right of=vout, node distance=1.2cm] (k) {$k_i(s)$};
    \node [output, right of=k, node distance = 1.6cm] (output){};

    \draw [->] (rinput) -- node[pos=0.4]{$-P_{\text{load,}i}$} (sum1);
    \draw [->] (pac_net) -- node[below, pos=0]{$-p_{\text{net},i}^{\text{ac}}$} (sum1);
    \draw [->] (pdc_net) -- node[above, pos=0]{$-p_{\text{net},i}^{\text{dc}}$} (sum1);
    \draw [->] (sum1) -- (sum2);
    \draw [->] (sum2) --  (controller);
    \draw [-] (controller) --  node[pos=1]{$v_i$}  (vout);
    \draw [->] (vout) -- (k);
    \draw [->] (k) -- node{$\omega_i$} (output);
    
    \draw [-] (DCgen) -- node[pos=1]{$p_{\text{gen,}i}$} (pgen);
    \draw [->] (pgen) -- node[pos=0.9]{} (sum2);
    
         \draw [->] (vout) |-  (DCgen);    

    \end{tikzpicture}
    \caption{Block diagram of the small-signal model of DC generation with a converter at a bus $i$.}\label{fig:dc_blockdiag}
\end{figure}

\subsubsection{DC Generation}
The dynamic response of the DC generation at a bus $i \in \mathcal{N}_{\text{dc}}$ can be modeled as a transfer function $g_{\text{gen,}i}(s)$ from the DC voltage $v_i \in \R$ to the change in power injection $p_{\text{gen},i} \in \R$ of the DC source. For example, for a solar PV array, $g_{\text{gen}}(s)$ is used to linearly approximate the sensitivity of the PV power-voltage curve (see Fig.~\ref{fig:pv}).
\begin{figure}[b!]
    \centering
    \includegraphics{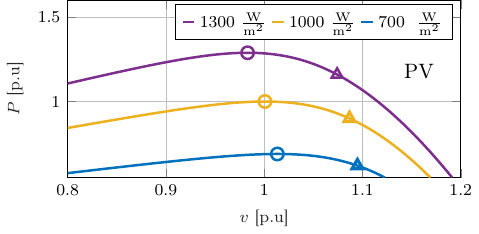}
    \caption{PV system output power $P$ as a function of the DC voltage $v$ under different insolation.}\label{fig:pv}
\end{figure}
Thus, a PV system operating at a voltage above its MPP voltage will have $g_{\text{gen,}i}(s) = -k_{\text{pv,}i}$ for some $k_{\text{pv,}i} \in \R_{>0}$, while a PV system operating at its MPP will have $g_{\text{gen,}i}(s) = 0$. Similarly, a battery providing proportional DC voltage control can be modeled as $g_{\text{gen,}i}(s) = -k_{\text{batt,}i}$ with $k_{\text{batt,}i} \in \R_{>0}$.

\subsubsection{Voltage Source Converters\footnote{The model presented in this paper only applies to voltage source converter topologies whose DC-link capacitor is directly connected to DC grids. This framework can be extended to cover, e.g., modular multilevel converters by introducing an additional transfer function mapping internal energy to DC voltage~\cite[Fig. 3]{MMC_ludois_2014}\cite{sanchez_mmc_energy}.} and Controls}\label{sec:conv_model}
The dynamics of a converter at bus $i \in \N$ is represented as two transfer functions. The transfer function $g_{\text{c,}i}(s)$ models the physics of the converter, i.e., the mapping from net power injected into bus $i \in \N$ to the DC voltage deviation $v_i \in \R$. For a two-level VSC with DC link capacitance $C_{\text{dc}}$, $g_{\text{c,}i}(s) = 1/C_{\text{dc}}s$. Using standard timescale separation arguments, we neglect the dynamics of the converter's inner control loops~\cite{timescale_subotic_2021}. Then, the outer controls of the converter are modeled by a transfer function $k_i(s)$ mapping from the DC voltage deviation $v_i  \in \R$ to the AC frequency deviation $\omega_i = \ddt \theta_i \in \R$. See Section~\ref{sec:applicationexample} for further details. 


\subsection{Synchronous Machines and Mechanical Power Generation}\label{sec:model_sm} 
\subsubsection{Synchronous Machines}
The dynamics of a synchronous machine $k \in \mathcal{N}_{\text{ac}}$ are modeled via the swing equation
\begin{subequations}\label{eq:swing_eq}
    \begin{align}
        \ddt \theta_{k} &= \omega_{\text{r,}k}, \\
        J_k\omega_0 \ddt \omega_{\text{r,}k} &= P_{\text{m,}k} - P_{\text{load,}k} - P_{\text{net,}k}, 
    \end{align}
\end{subequations}
where $J_k \in \R_{>0}$ is the machine's rotational inertia, $\omega_0 \in \R$ is the nominal frequency, $\omega_{\text{r,}k} \in \R$ is the deviation of the electrical frequency of the machine from the nominal frequency $\omega_0$, $P_{\text{m,}k} \in \R$ is the mechanical power generated by mechanical power generation (e.g., steam turbine), $P_{\text{net,}k} \in \R$ is the power injected into the AC network, and $P_{\text{load,}k} \in \R$ is the power consumption of loads at the machine bus. These dynamics can be represented by a transfer function
\begin{align*}
g_{\text{SM,}k}(s) = \frac{1}{J_k \omega_0s}
\end{align*}
from the net power injected into node $k \in \N$ to the machine's frequency deviation.

\subsubsection{Mechanical Power Generation}
The dynamic response of the mechanical power generation at a bus $k$ can be described as a transfer function $g_{\text{gen}}(s)$ from the machine's frequency deviation $\omega_{\text{r,}k}$ to the change in mechanical power generation $P_{\text{m},k}$. The dynamics of a turbine-governor system can be described as
\begin{align}
    \tau_k \ddt P_{\text{m,}k} = -P_{\text{m,}k} - k_{\text{g,}k}(\omega_{\text{r,}k} - \omega_0) \label{eq:turb_gov},
\end{align}
where $\tau_k \in \R_{\geq 0}$ is the turbine time constant, and $k_{\text{g,}k} \in \R_{\geq 0}$ is the governor gain. Therefore, for a machine with a turbine-governor, we obtain \[g_{\text{gen,}k}(s) = \frac{-k_{\text{g,}k}}{1 + \tau_k s}.\] In contrast, for a synchronous condenser we obtain $g_{\text{gen,}k}(s) = 0$. Next, we consider the aerodynamics of a wind turbine rotor.
%
%
%
If the wind turbine is operating at or above the speed and pitch angle corresponding to the MPP and controlling its pitch angle in proportion to rotor speed, we obtain
\begin{align*}
    g_{\text{gen,}k}(s) = -k_{\omega,k} - \frac{k_{\beta,k}k_{p,k}}{\tau_{\beta,k} s + 1},
\end{align*}
where $k_{\omega,k} \in \mathbb{R}_{\geq 0}$ is the sensitivity of the wind turbine's mechanical power output to changes in rotor frequency, $k_{\beta,k}  \in \mathbb{R}_{\geq 0}$ is the sensitivity of its mechanical power output to changes in pitch angle, $\tau_{\beta,k}$ is the time constant of the pitch motor, and $k_{p,k}$ is the control gain of the pitch control~\cite{universaldualport}. Notably, $k_{\beta,k}=0$ when operating at the optimal pitch angle and $k_{\omega,k}=0$ when operating at the optimal rotor speed, so $g_{\text{gen,}k}(s)=0$ when operating at the MPP.

\section{Reduced-order Modeling of Damper Windings}\label{sec:dw}
The model for synchronous machines presented in the previous section ignores the effects of damper windings on their frequency dynamics. This section presents a novel damper winding model which accounts for the effect of damper windings on the frequency dynamics of machines on a network.
To this end, we model a synchronous machine as two distinct buses in the network: (i) a bus representing the machine's back EMF voltage with electrical angle $\theta_{\text{r}}$ and frequency $\omega_{\text{r}}$ corresponding to the rotor speed, and (ii) a bus representing the interconnection point between the machine and the transmission grid  with AC voltage phase angle $\theta_{\text{s}}$ and frequency $\omega_{\text{s}}$ corresponding to the voltage of the stator (Fig.~\ref{fig:rs_buses}). 
\begin{figure}[b!]
    \centering
    \includegraphics[width=0.25\textwidth]{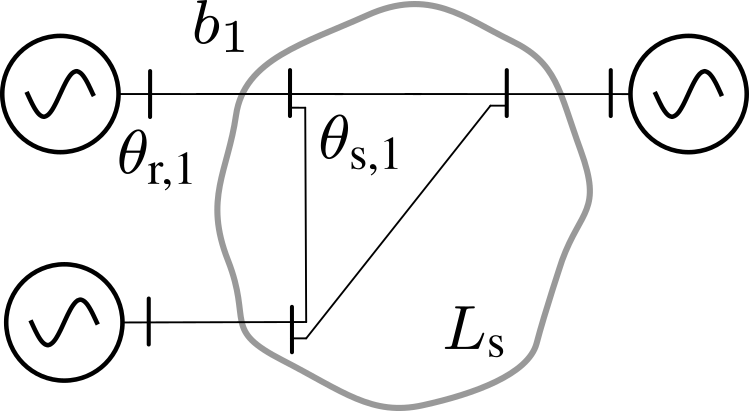}
    \caption{Diagram of the multi-machine system with damper windings. For the machine in the upper left, $\theta_{\text{r,1}}$ represents the AC voltage phase angle associated with the rotor, $\theta_{\text{s,1}}$ represents the AC voltage phase angle of the machine at the point of interconnection with the grid, and $b_1$ represents the susceptance of the stator windings. The (loopy) Laplacian $L_{\text{s}}$ represents the electrical network connecting the stator buses.}\label{fig:rs_buses}
\end{figure}
The susceptance of the connection between the two buses represents the susceptance of the machine's stator windings and is denoted as $b_k \in \R_{>0}$.

\subsection{Review of reduced-order damper winding model}
When the difference between the back EMF frequency $\omega_{\text{r}}$ and grid terminal frequency $\omega_{\text{s}}$ is small, the damper windings are commonly modeled as inducing a damping power 
\begin{align}
    P_{\text{D,}k} = D_k(\omega_{\text{r,}k} - \omega_{\text{s,}k})\label{eq:dw}
\end{align} 
proportional to the frequency difference across the stator (see \cite[Ch. 5.2]{bialekbook} and \cite[Ch. 6.7]{sauerpai}). Here, $D_k \approx \frac{1}{2}\left(D_{d,k} + D_{q,k}\right)$ is a function of the direct and quadrature axis damping coefficients $D_{d,k}$ and $D_{q,k}$ given by~\cite[p. 174]{bialekbook}
\begin{subequations}\label{eq:damper:fp}
    \begin{align}
        D_{d,k} &= \frac{X_{d,k}^{\prime} - X_{d,k}^{\prime\prime}}{{(X_k+X_{d,k}^{\prime})}^2} \frac{X_{d,k}^{\prime}}{X_{d,k}^{\prime\prime}} T_{d,k}^{\prime\prime}, \\
        D_{q,k} &= \frac{X_{q,k}^{\prime} - X_{q,k}^{\prime\prime}}{{(X_k+X_{q,k}^{\prime})}^2} \frac{X_{q,k}^{\prime}}{X_{q,k}^{\prime\prime}} T_{q,k}^{\prime\prime},
    \end{align} 
\end{subequations}
where $X_k \in \R_{>0}$ represents the armature leakage reactance, $X_{d,k}^{\prime} \in \R_{>0}$ and $X_{q,k}^{\prime} \in \R_{>0}$ represent the $d$- and $q$-axis transient reactances, $X_{d,k}^{\prime\prime} \in \R_{>0}$ and $X_{q,k}^{\prime\prime} \in \R_{>0}$ represent the $d$- and $q$-axis subtransient reactances, and $T_{d,k}^{\prime\prime} \in \R_{>0}$ and $T_{q,k}^{\prime\prime} \in \R_{>0}$ represent the $d$- and $q$-axis subtransient short-circuit time constants. While this model is intuitive, the common derivation either only considers a single-machine infinite bus system \cite[Ch. 5.2]{bialekbook} or replaces $\omega_{\text{s,}k}$ with a constant "rated" system frequency \cite[Ch. 6.7]{sauerpai}. Neither of these approaches can accurately reflect the nature of damper windings in a multi-machine/multi-converter system.

\subsection{Multi-machine damper winding model}
To address this gap, we show that the damping power $P_{\text{D,}k} \in \R$ is proportional to the rate of change of the power $P_{\text{r,net}} \in \R$ flowing from the back EMF bus to the grid. For the sake of brevity, this is shown for a network with only synchronous machines. However, the same result holds without loss of generality for a network which also has converters. First, the network is Kron reduced~\cite{kron} to eliminate all buses apart from these rotor and stator buses. Because the rotors have no direct connection (i.e., edges in $\mathcal{E}_{\text{ac}}$) to any buses other than their corresponding stator buses, all loads in the original network will be mapped to stator buses. Linearizing the AC power flow (see Section~\ref{subsec:acanddctransmission}), the network power flow for a system with $n \in \N$ machines can be written as
\[
    \begin{bmatrix}
        P_{\text{r,net}} \\ -P_{\text{s}}
    \end{bmatrix} = 
    \begin{bmatrix}
        D_B & -D_B \\ -D_B & L_{\text{s}}
    \end{bmatrix}
    \begin{bmatrix}
        \theta_{\text{r}} \\ \theta_{\text{s}}
    \end{bmatrix},
\]
where $P_{\text{r,net}}\in \R^n$ represents the power flowing out of the rotor buses, $P_{\text{s}} \in \R^n$ represents the load mapped from load buses to stator buses, $D_B = \diag\{b_k\}\in \R^{n\times n}$ is a diagonal matrix of the stator impedances, and $L_s \in \R^{n \times n}$ represents the loopy Laplacian matrix corresponding to the network consisting of stator buses (see Fig.~\ref{fig:rs_buses}). Therefore,
\begin{subequations}
    \begin{align}
        \theta_{\text{s}} &= L_{\text{s}}^{-1}D_B\theta_{\text{r}} - L_{\text{s}}^{-1}P_{\text{s}}\label{eq:omega_s}, \\ 
        P_{\text{r,net}} &= \left(D_B -D_B L_{\text{s}}^{-1}D_B\right)\theta_{\text{r}} + D_B L_{\text{s}}^{-1}P_{\text{s}} \label{eq:p_rnet}.
    \end{align}
\end{subequations} 
Note that when there is no load in the AC network, \eqref{eq:omega_s} represents a voltage divider of the AC voltage phase angles at the back EMF terminal\footnote{By the properties of the Kron reduction, $L_{\text{s}}^{-1}D_B$ will be non-negative and row-stochastic~\cite[Lemma~II.2]{kron}, i.e., all entries are non-negative and all rows sum to one. Therefore, $\theta_{\text{s}} = \sum_{i=1}^n \alpha_i \theta_{\text{r,}i}$, where $\alpha_i \in \R_{\geq 0}$ and $\sum_{i=1}^n \alpha_i = 1$.}. Substituting~\eqref{eq:omega_s} into~\eqref{eq:dw} results in
\begin{align*}
    P_{\text{D}} &= D_D\left(\omega_{\text{r}} - \ddt \theta_{\text{s}}\right) \\
    &= D_D\left(\omega_{\text{r}} - L_{\text{s}}^{-1}D_B\omega_{\text{r}} + L_{\text{s}}^{-1}\ddt P_{\text{s}}\right) \\
    &= D_D\left(I - L_{\text{s}}^{-1}D_B\right)\omega_{\text{r}} + D_D L_{\text{s}}^{-1}\ddt P_{\text{s}}, 
\end{align*}
where $P_{\text{D}}\in \R^n$ is the vector of damping powers, and $D_D = \diag\{D_k\}_{k=1}^n \in \R^{n \times n}$. Moreover, defining $\Gamma = D_D{D_B}^{-1} = \diag\{\gamma_k\}_{k=1}^n \in \R^{n \times n}$, we can rewrite $P_{\text{D}}$ as
\begin{figure}[b!]
    \centering
    \tikzset{
block/.style = {draw, fill=white, rectangle, minimum height=3em, minimum width=3em},
tmp/.style  = {coordinate}, 
sum/.style= {draw, fill=white, circle, node distance=1cm},
input/.style = {coordinate},
output/.style= {coordinate,node distance=1cm},
pinstyle/.style = {pin edge={to-,thin,black}
}
}

\begin{tikzpicture}[auto, node distance=2cm,>=latex']
    \node [input, name=rinput] (rinput) {};
    \node [sum, right of=rinput, node distance=1.3cm] (sum1) {};
    \node [sum, right of =sum1] (sum2) {};
    
    \node [block, right of=sum2, node distance=1.5cm] (controller) {$g_{\text{SM,}k}(s)$};
          
   \node [output,right of=controller, node distance=1.2cm] (omega) {};  
   \node [output, right of=omega, node distance = 0.8cm] (yout) {};  

    \node [block, below of=controller,node distance=1.2cm] (ACgen) {$g_{\text{gen,}k}(s)$};    
        \node [output, left of=ACgen, node distance=1.5cm] (pgen) {};
    
    \node [block, below of=sum1, node distance=1.6cm] (dw) {$1 + \gamma_k s$};
    \node [input, below of=dw, node distance=1cm] (pnet) {};

    \draw [->] (rinput) -- node[pos=0.4]{$-P_{\text{load,}k}$} (sum1);
    \draw [->] (pnet) -- node[below, pos=0]{$-p_{\text{net},k}^{\text{ac}}$} (dw);
    \draw [->] (dw) -- node[pos=0.4]{$-p_{\text{net},k}^{\text{ac}} - P_{\text{D},k}$} (sum1);
    \draw [->] (sum1) -- (sum2);
    \draw [->] (sum2) --  (controller);
    \draw [-] (controller) --   (omega);
    
    \draw [-] (ACgen) -- node[pos=0.6]{$p_{\text{gen,}k}$} (pgen);
    \draw [->] (pgen) -- node[pos=0.9]{} (sum2);
    
         \draw [->] (omega) |-  (ACgen);  
         \draw [->] (omega) -- node{$\omega_{\text{r,}k}$} (yout);  

    \end{tikzpicture}
    \caption{Block diagram of the small-signal model of AC generation with a synchronous generator with damper windings at a bus $k$.}\label{fig:ac_dw_block_diag}
\end{figure}
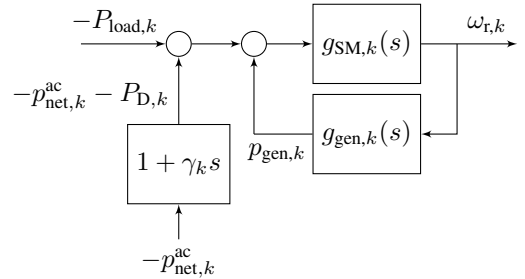
\begin{figure}[b!]
    \centering
    \includegraphics[width=0.9\linewidth]{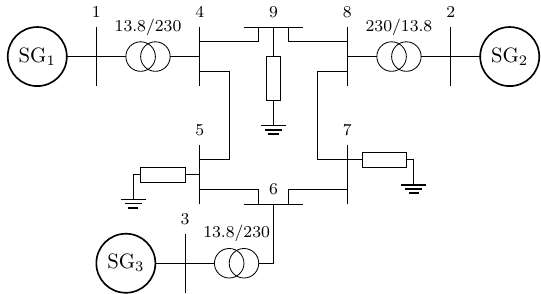}
    \caption{Diagram of the IEEE 9-bus system with three synchronous generators.}\label{fig:9bus}
\end{figure}
\begin{figure*}
    \centering
    \begin{subfigure}{0.3\textwidth}
        \centering
        \includegraphics[width=1\linewidth]{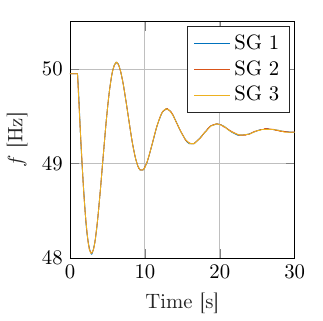}
        \subcaption{EMT simulation}\label{fig:freq_emt_dw}
    \end{subfigure}
    \hfill
    \begin{subfigure}{0.3\textwidth}
        \centering
        \includegraphics[width=1\linewidth]{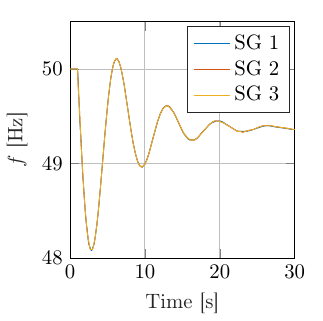}
        \subcaption{Model with damper windings}\label{fig:freq_model_dw}
    \end{subfigure}
    \hfill
    \begin{subfigure}{0.3\textwidth}
        \centering 
        \includegraphics[width=1\linewidth]{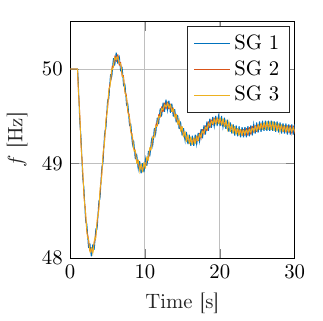}
        \subcaption{Model without damper windings}\label{fig:freq_model_no_dw}
    \end{subfigure}
    \medskip

    \begin{subfigure}{0.3\textwidth}
        \centering
        \includegraphics[width=1\linewidth]{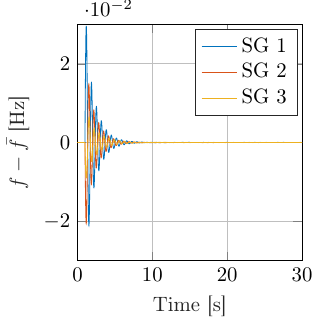}
        \subcaption{EMT simulation}\label{fig:freq_diff_emt_dw}
    \end{subfigure}
    \hfill
    \begin{subfigure}{0.3\textwidth}
        \centering
        \includegraphics[width=1\linewidth]{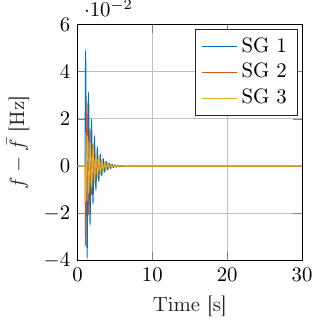}
        \subcaption{Model with damper windings}\label{fig:freq_diff_model_dw}
    \end{subfigure}
    \hfill
    \begin{subfigure}{0.3\textwidth}
        \centering 
        \includegraphics[width=1\linewidth]{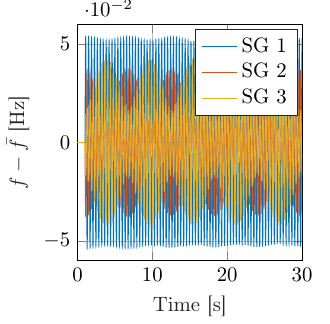}
        \subcaption{Model without damper windings}\label{fig:freq_diff_model_no_dw}
    \end{subfigure}
    \caption{Comparison of the small-signal model with and without the damper winding model and an EMT simulation of the IEEE 9-bus system after a $0.75$~p.u. load step. The plots compare the generator frequencies $f$ and the difference of each generator's frequency to the average frequency $\bar{f}$.}\label{fig:dw_validation}
\end{figure*}
\begin{subequations}
    \begin{align}
        P_{\text{D}} &= \Gamma \left(\left(D_B - D_B L_{\text{s}}^{-1}D_B\right)\omega_{\text{r}} + D_B L_{\text{s}}^{-1}\ddt P_{\text{s}}\right),\\
        P_{\text{D}}&= \Gamma \ddt P_{\text{r,net}}. \label{eq:pdpow}
    \end{align}
\end{subequations}
Note that the $P_{\text{r,net}}$ in \eqref{eq:pdpow} is equivalent to $P^\text{ac}_{\text{net}}$ defined in Section~\ref{sec:model_sm}. Using \eqref{eq:pdpow}, the machine dynamics \eqref{eq:swing_eq} become
\begin{align}
    J_k\omega_0 \ddt \omega_{\text{r,}k} 
    &= P_{\text{m,}k} - P_{\text{load,}k} - (P_{\text{net,}k} + \gamma_k \ddt P_{\text{net,}k}).\label{eq:swing_dw}
\end{align}
While this result was derived for a network with only synchronous machines,~\eqref{eq:swing_dw} holds without loss of generality in a network which also contains converters. The complete model of the dynamics of the synchronous machine, including power generation and dampers windings, is shown in Fig.~\ref{fig:ac_dw_block_diag}.

\subsection{Validation of the Damper Winding Model}
To validate the damper winding model \eqref{eq:pdpow}, we compare the small-signal model of the IEEE 9-bus system in Fig.~\ref{fig:9bus} with an electromagnetic transient (EMT) simulation. The small-signal model is parametrized using the machine parameters given in Table~\ref{Table} in Appendix~\ref{app:ieee9} and standard IEEE 9-bus line parameters. We analyze the system response to a large load step of $0.75$~p.u at Bus 7, comparing the EMT simulation to the small-signal model with the damper winding model \eqref{eq:pdpow} parametrized according to \eqref{eq:damper:fp} and without the damper winding model (i.e., $D=0$). In particular, we examine the response of the frequency of each machine and the deviation of each machine's frequency to the average frequency after the load step (Fig.~\ref{fig:dw_validation}).

While the average frequency response of the small-signal model without damper windings (Fig.~\ref{fig:freq_model_no_dw}) matches the average frequency response of the EMT simulation (Fig.~\ref{fig:freq_emt_dw}), the frequencies of the different machines oscillate against each other without converging (Fig.~\ref{fig:freq_diff_model_no_dw}). Notably, incorporating the damper winding model damps these frequency oscillations (Fig.~\ref{fig:freq_diff_model_dw}), which matches the convergence exhibited by the EMT simulation (Fig.~\ref{fig:freq_diff_emt_dw}). These results indicate that modeling the effects of damper windings as described in Sec.~\ref{sec:dw} allows the small-signal machine model to accurately capture the frequency dynamics of a multi-machine system. We emphasize that modeling the damper windings as absolute damping (i.e., $P_{\text{D},k}=-D \omega_{\text{r},k}$) to the small-signal model \eqref{eq:swing_eq} cannot be used improve the accuracy of the small-signal model.

\section{Interconnected System Model}\label{sec:inter}
In this section, we introduce the frequency-domain model of the overall hybrid AC/DC power system. Here, we again consider a Kron-reduced network with $n \in \N$ buses, where $\mathcal{N} = \{1,\dots,n\}$ is the set of all buses. It is assumed that (i) the network is connected and (ii) that all DC generation is located at buses with converters, so all AC buses in the Kron-reduced network either correspond to generators or converters and all DC buses correspond to converters, which may be attached to a DC resource or network. Consequently, $\mathcal{N}_{\text{ac}} = \mathcal{N}$ is the set of buses connected to AC lines and $\mathcal{N}_{\text{dc}} \subseteq \mathcal{N}$ is the set of buses connected to converters. Moreover, we define $\mathcal{N}_{\text{dc}}^{\text{net}} \subseteq \mathcal{N}_{\text{dc}}$ as the set of DC buses directly connected to a DC network (i.e., connected to at least one other DC bus via a DC line).
We again define $\mathcal{E}_{\text{ac}} \subseteq \mathcal{N}_{\text{ac}} \times \mathcal{N}_{\text{ac}}$ and $\mathcal{E}_{\text{dc}} \subseteq \mathcal{N}_{\text{dc}} \times \mathcal{N}_{\text{dc}}$ to denote the sets of AC and DC lines. 

\subsection{Bus Dynamics}\label{sec:bus_dynamic_model}
The dynamics of each bus are represented by two transfer functions, $g(s)$ and $k(s)$, which together convert from total load to AC frequency.
\subsubsection{Converter Bus Dynamics}
For a DC bus $i \in \mathcal{N}_{\text{dc}}$, we define $g_i(s)$ to be the overall transfer function 
\begin{align*}
    g_i(s) = \dfrac{g_{\text{c,}i}(s)}{1 - g_{\text{c,}i}(s)g_{\text{gen,}i}(s)}
\end{align*}
from the net load mapped to that bus to the DC bus voltage $v_i$ (see Fig.~\ref{fig:dc_blockdiag}). As in Section~\ref{sec:conv_model}, $k_i(s)$ is defined as the transfer function from DC voltage to AC frequency for the converter and is used to model converter controls. 

\subsubsection{AC Bus Dynamics}
For an AC bus $k \in \mathcal{N}_{\text{ac}}$, define $g_k(s)$ as the total transfer function 
\begin{align*}
    g_k(s) = \dfrac{g_{\text{SM,}k}(s)}{1 - g_{\text{SM,}k}(s)g_{\text{gen,}k}(s)}
\end{align*}
from the net load mapped to that bus to the AC frequency. Notably, the effect of the damper windings is not included in $g_k(s)$ but modeled by $k_k(s) = 1 + \gamma_k s$. 

To align with the converter bus dynamics, the order of the transfer functions $g(s)$ and $k(s)$ in Fig.~\ref{fig:ac_dw_block_diag} can be swapped, effectively rescaling the input load disturbance by $k^{-1}(s)$. Note that by swapping the two blocks, the intermediate node between $g(s)$ and $k(s)$, which we call $\tilde{\omega}_{\text{r}}$, no longer directly corresponds to any of the physical quantities in the network but can be thought of as the machine's frequency without the effects of the damper windings while $\omega_{\text{r}}$ is the machine's frequency.

\subsection{Closed-loop Block Diagram}
The overall closed-loop dynamics of the hybrid AC/DC system are shown in Fig.~\ref{fig:acdc_blockdiag}.
\begin{figure}[b!]
    \centering
    \tikzset{
block/.style = {draw, fill=white, rectangle, minimum height=3em, minimum width=3em},
tmp/.style  = {coordinate}, 
sum/.style= {draw, fill=white, circle, node distance=1cm},
input/.style = {coordinate},
output/.style= {coordinate,node distance=1cm},
pinstyle/.style = {pin edge={to-,thin,black}
}
}

\begin{tikzpicture}[auto, node distance=2cm,>=latex']
    \node [input, name=rinput] (rinput) {};
    \node [sum, right of=rinput] (sum1) {};
    \node [sum, below of=sum1, node distance=1.2cm] (sum2) {};    
    
    \node [block, right of=sum1, node distance=1.5cm] (controller) {\scalebox{0.5}{$\!\!\!\begin{array}{ccc} g_1(s)\!\!\!\!\!\!\! & & \\ & \ddots & \\ &  & \!\!\!\!\!\!\!g_n(s) \end{array}\!\!\!$}};
          
    \node [output, right of=controller, node distance=1.5cm] (busmid){};
    
    \node [block, right of=busmid,node distance=1.5cm] (Km) {\scalebox{0.5}{$\!\!\!\begin{array}{ccc} \!k_1(s)\!\!\!\!\!\!\! & & \\ & \ddots & \\ &  & \!\!\!\!\!\!\!k_n(s)\! \end{array}\!\!\!$}};   
    
    \node [output,right of=Km, node distance=1cm] (ybranch) {};      
        \node [output,right of=ybranch, node distance=0.5cm] (yout) {};      

    \node [below of=busmid,node distance=2cm] (system)     
{};
    \node [block, below of=controller,node distance=1.2cm] (DCnet) {$L_{\text{dc}}$};    
        \node [block, below of=busmid,node distance=2.4cm] (ACnet){$\frac{1}{s} L_{\text{ac}}$};

    \draw [->] (rinput) -- node{$-P_{\text{dist}}$} (sum1);
    \draw [->] (sum1) --  (controller);
    \draw [->] (controller) --  node[]{$z = (\tilde{\omega}_{\text{r}}, v)$}  (Km);
    \draw [->] (Km) --  node[]{$\omega_r$}  (yout);
    
    \draw [->] (sum2) --   node[pos=0.9]{$-$}  (sum1);    
    \draw [->] (ACnet) -| node[pos=0.7]{$P^{\text{ac}}_{\text{net}}$} (sum2);
    \draw [->] (DCnet) -- node[]{$P^{\text{dc}}_{\text{net}}$} (sum2);
    
        \draw [->] (busmid) |-  (DCnet);    
\draw [->] (ybranch) |-  (ACnet);    

    \end{tikzpicture}
    \caption{The overall block diagram for the hybrid AC/DC system model.}\label{fig:acdc_blockdiag}
\end{figure}
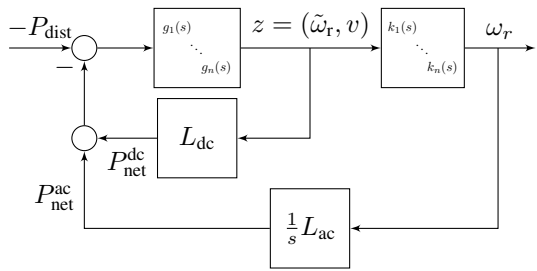
%
Notably, the system input 
\begin{align*}
    P_{\text{dist,}i} = \left\{\begin{array}{lr}
        P_{\text{load,}i}, & i \in \mathcal{N}_{\text{dc}} \\
        k_i^{-1}(s)P_{\text{load,}i}, & i \notin \mathcal{N}_{\text{dc}}
    \end{array} \right.
\end{align*}
now represents the load mapped (by Kron reduction) to each bus  and filtered by the inverse damper winding model for synchronous machine buses. Next, we define $G(s) \in \C^{n \times n}$ and $K(s) \in \C^{n \times n}$ as the diagonal matrices of the individual $g(s)$ and $k(s)$ transfer functions at each bus. Moreover, we define $P_{\text{net}}^{\text{ac}} \in \R^{n}$ as the vector containing the power injection at each AC bus and $P_{\text{net}}^{\text{dc}} \in \R^{n}$ be the power injection at each DC bus. Note that $G(s)$ maps the difference between $P_{\text{dist}}$ and the total network power injection to the vector $z \in \R^n$, where $z_i = v_i$ if $i \in \mathcal{N}$ is a converter bus and $z_i = \tilde{\omega}_{\text{r,}i}$ if $i \in \mathcal{N}$ is a machine bus. Then, $K(s)$ maps $z$ into the vector of AC frequencies $\omega_{\text{r}} \in \R^n$.

The power flow through both the AC and DC networks is again modeled by Laplacian matrices $L_{\text{dc}} \in \R^{n \times n}$ and $L_{\text{ac}} \in \R^{n \times n}$. Notably, $L_{\text{dc}} \in \R^{n \times n}$ is defined as the Laplacian matrix over the graph $(\mathcal{N}, \mathcal{E}_{\text{dc}})$, i.e., $P_{\text{net}}^{\text{dc}} = L_{\text{dc}}z$ and $P_{\text{net},i}^{\text{dc}} = 0$ if $i \notin \mathcal{N}_{\text{dc}}^{\text{net}}$. Similarly, $L_{\text{ac}} \in \R^{n \times n}$ is defined as the Laplacian matrix over the graph $(\mathcal{N}, \mathcal{E}_{\text{ac}})$ and the AC network power flow is modeled as $P_{\text{net}}^{\text{ac}} = L_{\text{dc}} \frac{1}{s} \omega_{\text{r}}$.

\section{Stability Analysis}\label{sec:stability}
This section presents a set of analytical conditions on the bus dynamics of a hybrid AC/DC transmission system to ensure AC frequency stability and DC voltage stability. First, the general hybrid network in Fig.~\ref{fig:acdc_blockdiag} is analyzed for an arbitrary set of bus dynamics $g(s)$ and $k(s)$, and a set of general stability conditions for the node dynamics is put forward with a discussion of their associated interpretations. A simplification of these conditions is then introduced to clarify how these conditions apply to converters on DC networks. 

\subsection{Analytical Stability Conditions}
Let $n_{\mathrm{net}}^{\text{ac}} \in \mathbb{N}$ denote the number of disconnected subnetworks in the AC graph, and $n_{\mathrm{net}}^{\text{dc}} \in \mathbb{N}$ denote the number of disconnected subnetworks in the DC graph. We use 
$\mathcal{N}_{\text{dc}}^i \subseteq$  $\mathcal{N}_{\text{dc}}^{\text{net}}$ to denote the sets of nodes in the 
$i$-th DC subnetwork, where we consider a subnetwork to be a connected component of the DC graph of at least two buses. First, we define a notion of input-output $\mathscr{H}_{\infty}$ stability. A transfer function $F(s) \in \C^{m \times n}$ is $\mathscr{H}_{\infty}$ stable if $\|F(s)\|_{\infty}$ is bounded, where 
\begin{align*}
    \|F(s)\|_{\infty} \coloneqq \sup_{s \in \C_+} \sigma_{\text{max}}\left(F(s)\right)
\end{align*}
and $\sigma_{\text{max}}\left(F(s)\right)$ denotes the largest singular value of $F(s)$. The following lemma is useful for proving $\mathscr{H}_{\infty}$ stability.

\begin{lemma}\label{lemma:real_part} For a given $s \in \C$ and $\kappa \in \R_{>0}$, an invertible matrix $H(s) \in \C^{n\times n}$ will satisfy $\sigma_{\text{max}}\left(H(s)\right) \leq \kappa$
    if \begin{align}\label{eq:re_stability_test}
        \forall x\in \C^n \::\: \|x\| = 1, \quad & \Re\left(x^*H^{-1}(s) x\right) \geq \frac{1}{\kappa}.
    \end{align}
\end{lemma}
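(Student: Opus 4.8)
The plan is to reduce the upper bound on $\sigma_{\text{max}}(H(s))$ to a lower bound on the smallest singular value of the inverse, and then extract that lower bound directly from the hypothesis via the Cauchy--Schwarz inequality. Throughout I fix $s$ and write $H = H(s)$ for brevity, and let $\sigma_{\text{min}}$ denote the smallest singular value.

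First I would exploit invertibility. Since $H$ is invertible, all its singular values are nonzero and the singular values of $H^{-1}$ are exactly their reciprocals; in particular $\sigma_{\text{max}}(H) = 1/\sigma_{\text{min}}(H^{-1})$. Hence the desired conclusion $\sigma_{\text{max}}(H) \le \kappa$ is equivalent to the lower bound $\sigma_{\text{min}}(H^{-1}) \ge 1/\kappa$, which is what I would establish.

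Next I would use the variational characterization $\sigma_{\text{min}}(H^{-1}) = \min_{\|x\|=1}\|H^{-1}x\|$ and bound the minimized quantity pointwise on the unit sphere. For any unit vector $x$, Cauchy--Schwarz gives $|x^* H^{-1} x| \le \|x\|\,\|H^{-1}x\| = \|H^{-1}x\|$, and trivially $\Re(x^* H^{-1}x) \le |x^* H^{-1}x|$. Chaining these bounds with the hypothesis \eqref{eq:re_stability_test} yields $\|H^{-1}x\| \ge \Re(x^* H^{-1}x) \ge 1/\kappa$ for every unit $x$, so the minimum over the unit sphere is at least $1/\kappa$. Combined with the reciprocal identity of the first step, this gives $\sigma_{\text{max}}(H) \le \kappa$ and completes the argument.

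I expect no substantive obstacle here: the entire proof is a short chain of elementary inequalities. The only point requiring a little care is the reciprocal relationship between the singular values of $H$ and $H^{-1}$, which is exactly where the invertibility assumption is used — without it $\sigma_{\text{min}}(H^{-1})$ would be ill-defined. An equivalent route that sidesteps the reciprocal identity is to note that the hypothesis, by homogeneity, gives $\Re(x^*H^{-1}x) \ge \|x\|^2/\kappa$ for all $x$, so the same Cauchy--Schwarz step shows $\|H^{-1}x\| \ge \|x\|/\kappa$, i.e.\ $H^{-1}$ is bounded below by $1/\kappa$ and therefore $\|H\| \le \kappa$. I would present the singular-value version since it matches the statement most directly, keeping the coercivity observation in reserve as a cross-check.
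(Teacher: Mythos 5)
Your proof is correct, and it follows the same overall reduction as the paper: both arguments convert the claim $\sigma_{\text{max}}(H) \leq \kappa$ into the equivalent lower bound $\sigma_{\text{min}}(H^{-1}) \geq 1/\kappa$ and then obtain that bound from the hypothesis on $\Re(x^*H^{-1}x)$. The difference lies in how that last step is justified. The paper cites a matrix-analysis result (Bhatia, Prop.~III.5.1) stating that the smallest singular value of a matrix is at least the smallest eigenvalue of its Hermitian part, and then identifies that eigenvalue with $\min_{\|x\|=1}\Re(x^*H^{-1}x)$ via the Rayleigh quotient. You bypass the citation entirely: applying Cauchy--Schwarz pointwise on the unit sphere,
\begin{align*}
    \|H^{-1}x\| \;\geq\; |x^*H^{-1}x| \;\geq\; \Re\left(x^*H^{-1}x\right) \;\geq\; \tfrac{1}{\kappa},
\end{align*}
and minimizing over unit vectors gives $\sigma_{\text{min}}(H^{-1}) \geq 1/\kappa$ directly. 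In effect you have supplied an elementary, self-contained proof of exactly the instance of the cited inequality that the lemma needs, which makes your version slightly more portable (no external reference required) at no cost in length. A minor point in your favor: you also keep the inequalities non-strict throughout, which matches the lemma as stated, whereas the paper's write-up slips between strict and non-strict versions ($<\kappa$ versus $\leq\kappa$); this is immaterial to correctness but your handling is the cleaner of the two.
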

A proof is given in Appendix~\ref{app:proofs}. The overall transfer function from $P_{\text{dist}}$ to $\omega_{\text{r}}$ in Fig.~\ref{fig:acdc_blockdiag} can be written as
\begin{align}
    H(s) \coloneqq {\left(G^{-1}(s)K^{-1}(s) + L_{\text{dc}}K^{-1}(s) + \frac{1}{s}L_{\text{ac}}\right)}^{-1},\label{eq:y_tf}
\end{align}
and the transfer function from $P_{\text{dist}}$ to $(\tilde{\omega}_{\text{r}},v)$ 
can be written as $K^{-1}(s)H(s)$. Next, for each $\mathcal{N}_{\text{dc}}^j$, let $\overline{k^{-1}_j}(s) = \frac{1}{|\mathcal{N}_{\text{dc}}^j|}\sum_{i \in \mathcal{N}_{\text{dc}}^j}k_i^{-1}(s)$. 
We define 
\begin{align*}
    S_{\delta} &= \left\{s \in \overline{\C}_+ \: : \: |s| < \delta \right\}, \\
    \Delta(s) &= \max_{j \in \left\{1,\,\dots,\,n_{\text{net}}^{\text{dc}}\right\}}\max_{i \in \mathcal{N}_{\text{dc}}^j} \left|k_i^{-1}(s) - \overline{k^{-1}_j}(s)\right|, \\
    \xi_i(s) &=  \Re\left(g_i^{-1}(s)k_i^{-1}(s)\right).
\end{align*}
We also define $\lambda_{\text{ac}}^{\text{min}}$ and $\lambda_{\text{ac}}^{\text{max}}$ to be the minimum non-zero eigenvalue and maximum eigenvalue of $L_{\text{ac}}$, and equivalently define $\lambda_{\text{dc}}^{\text{min}}$ and $\lambda_{\text{dc}}^{\text{max}}$ for $L_{\text{dc}}$.
Let $L_{\text{dc}}^j$ be the Laplacian matrix of the graph over $(\mathcal{N}, \mathcal{E}_{\text{dc}}^j)$ where $\mathcal{E}_{\text{dc}}^j$ is the set of edges in the $j$-th DC subnetwork. Similarly, we define $\lambda_{\text{dc,}j}^{\text{min}}$ as the smallest non-zero eigenvalue of $L_{\text{dc}}^j$.

To show that the $\mathscr{H}_{\infty}$ norm of $H(s)$ and $K^{-1}(s)H(s)$ are bounded, we apply Lemma~\ref{lemma:real_part} to $H(s)$ and show that $\sigma_{\text{max}}(H(s))$ will be bounded by some $\kappa$ for all $s \in \C_+ \setminus 0$. We decompose $H^{-1}(s)$ into the sum of $G^{-1}(s)K^{-1}(s)$, $L_{\text{dc}}K^{-1}(s)$, and $\frac{1}{s}L_{\text{ac}}$ and separately analyze each of these terms to ensure the total value of $\Re(x^*H^{-1}(s)x)$ will be positive. 

\begin{condition}[\textbf{Bus dynamic conditions}]\label{cond:acdc_universal}
    For some $\delta \in \R_{>0}$, the following conditions hold for every bus $i \in \mathcal{N}$: 
    \begin{enumerate}[leftmargin=*]
        \item $k_i^{-1}(s)$ is $\mathscr{H}_{\infty}$ stable\label{acdc_cond:stable_k}, 
        \item there exists $c \in \R_{>0}$ such that $|g_i^{-1}(s)| \leq c$ holds for all $s \in S_{\delta}$,\label{acdc_cond:stable_ss_g}
        \item for all $s \in \overline{\C}_+\setminus 0$ it holds that $\xi_i(s) > 0$,\label{acdc_cond:pr_freq}
        \item for all $s \in \overline{\C}_+$ it holds that $\frac{1}{n}\sum_{i \in \mathcal{N}} \xi_i(s) > 0$. \label{acdc_cond:avg_stab}
    \end{enumerate}
\end{condition}

\begin{condition}[\textbf{Coherent dynamics of networked DC buses}]\label{cond:acdc_dc_network} \phantom{.} \vspace{-1em}
    \begin{enumerate}[leftmargin=*]
        \item \label{acdc_cond:avg_k_spr}$\Re \left(\overline{k^{-1}_j}(s)\right) > 0$ for all $s \in \overline{\C}_+$ and all $j \in \left\{1,\,\dots,\,n_{\text{net}}^{\text{dc}}\right\}$, and there exists $c_1 \in \R_{>0}$ such that $\Re \left(\overline{k^{-1}_j}(s)\right) \geq c_1$ for all $s \in S_{\delta}$, 
        \item \label{acdc_cond:k_ss_conv} there exists $c_2 \in \R_{>0}$ such that $\Delta(s) \leq c_2 |s|$ holds for all $s \in S_{\delta}$,
        \item for all $s \in \overline{\C}_+ \setminus 0$ it holds that
        \begin{align*}
            4\left(\min_{i\in \mathcal{N}}\xi_i(s)\right)\left(\min_{j \in \left\{1,\,\dots,\,n_{\text{net}}^{\text{dc}}\right\}} \Re \left(\overline{k^{-1}_j}(s)\right) \frac{\lambda_{\text{dc,}j}^{\text{min}}}{\lambda_{\text{dc}}^{\text{max}}}\right) > \lambda_{\text{dc}}^{\text{max}} \Delta^2(s).
        \end{align*}\label{acdc_cond:bounding_k_diff}
    \end{enumerate}
\end{condition}

Before stating the main result, we provide interpretations of the stability conditions. Notably, Condition~\ref{cond:acdc_universal} applies to every bus on the network. Condition~\ref{cond:acdc_universal}.\ref{acdc_cond:stable_k} requires that all controls and damper windings satisfy $k_i^{-1}(s) \in \mathscr{H}_\infty$, ensuring that the DC voltages will always be stable if the AC bus frequencies are stable. Condition~\ref{cond:acdc_universal}.\ref{acdc_cond:stable_ss_g} ensures that no $g_i(s)$ has a zero at $s=0$. If this condition is not met then the coherent dynamics $g(s)$, which are given by the harmonic mean of the functions $g_i(s)$~\cite{avg_mode}, do not admit a stable steady state. Condition~\ref{cond:acdc_universal}.\ref{acdc_cond:pr_freq} requires that each bus has strictly positive real part for $s \in \overline{\C}_+ \setminus 0$, which ensures that each bus contributes to damping transient dynamics. Similarly, Condition~\ref{cond:acdc_universal}.\ref{acdc_cond:avg_stab} requires that the coherent dynamics of $g(s)k(s)$ have strictly positive real part for all $s\in \overline{\C}_+$ (i.e., contribute damping) which is one of the properties of a strictly passive system~\cite[Ch. 6]{khalil2002nonlinear}.

Additionally, Condition~\ref{cond:acdc_dc_network} applies to all DC subnetworks individually. Condition~\ref{cond:acdc_dc_network}.\ref{acdc_cond:avg_k_spr} requires that the coherent dynamics of the converter controls (represented by $k_i(s)$) on each DC subnetwork will have strictly positive real part for all $s\in \overline{\C}_+$ (i.e., contribute damping). Moreover, by ensuring $\Re \left(\overline{k^{-1}_j}(s)\right)$ is lower bounded by a strictly positive number for $s \in S_{\delta}$, Condition~\ref{cond:acdc_dc_network}.\ref{acdc_cond:avg_k_spr} ensures the coherent converter control dynamics of each DC network will have finite steady-state gain. Condition~\ref{cond:acdc_dc_network}.\ref{acdc_cond:k_ss_conv} requires that the controls of each converter on a given DC subnetwork converge to the coherent dynamics in steady state. 

Condition~\ref{cond:acdc_dc_network}.\ref{acdc_cond:bounding_k_diff} requires that the converter controls on each DC subnetwork are sufficiently coherent by providing an upper bound on $\Delta(s)$. Note that $\xi_i(s)$ corresponds to the amount of positive damping provided by the bus dynamics of bus $i$, and $\Re \left(\overline{k^{-1}_j}(s)\right)$ corresponds to the positive damping provided by the coherent control dynamics of the $j$-th DC subnetwork. If the amount of damping provided by the bus dynamics or by the coherent control dynamics of the DC subnetworks are reduced, the control coherency requirement in Condition~\ref{cond:acdc_dc_network}.\ref{acdc_cond:bounding_k_diff} becomes stricter. The coherency requirement is also a function of the DC network parameters. Specifically, $\lambda_{\text{dc,}j}^\text{min}$ corresponds to the strength of the weakest connection in the $j$-th DC subnetwork, and $\lambda_{\text{dc}}^\text{max}$ can be understood as a metric of the connectivity of the most tightly connected node in the DC network. If the ratio $\lambda_{\text{dc,}j}^\text{min}/\lambda_{\text{dc}}^\text{max}$ becomes smaller, or if the DC network becomes more tightly connected (i.e. $\lambda_{\text{dc}}^\text{max} \to \infty$), the coherency requirement for the converter controls on each DC subnetwork will become stricter.

\begin{theorem}[\textbf{Frequency and DC voltage stability}]\label{thm:acdc_stability}
    Consider $K(s)$ and $G(s)$ such that Conditions~\ref{cond:acdc_universal} and~\ref{cond:acdc_dc_network} hold. Then, if $L_{\text{ac}}$ and $L_{\text{dc}}$ together represent the graph of a connected AC/DC network, the transfer functions $H(s)$ and $K^{-1}(s)H(s)$ are $\mathscr{H}_{\infty}$ stable.
\end{theorem}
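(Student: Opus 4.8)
The plan is to invoke Lemma~\ref{lemma:real_part} applied to $H(s)$: it suffices to exhibit a single $\kappa$ with $\Re\!\left(x^* H^{-1}(s)x\right)\ge 1/\kappa$ for every unit vector $x$ and every $s\in\overline{\C}_+\setminus 0$, since this simultaneously certifies invertibility of $H^{-1}(s)$ (hence analyticity of $H$ in the closed right half plane) and the bound $\sigma_{\text{max}}(H(s))\le\kappa$. I would decompose $H^{-1}(s)=G^{-1}(s)K^{-1}(s)+L_{\text{dc}}K^{-1}(s)+\tfrac1s L_{\text{ac}}$ and estimate the real part of the three quadratic forms separately. The AC term is immediately harmless away from the origin: for $s\in\overline{\C}_+\setminus0$ one has $\Re(1/s)=\Re(s)/|s|^2\ge0$ and $L_{\text{ac}}\succeq0$, so $\Re\!\left(x^*\tfrac1s L_{\text{ac}}x\right)\ge0$ and can simply be dropped. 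The diagonal term contributes $\Re\!\left(x^*G^{-1}K^{-1}x\right)=\sum_{i}\xi_i(s)|x_i|^2\ge\big(\min_i\xi_i(s)\big)\|x\|^2$, strictly positive by Condition~\ref{cond:acdc_universal}.\ref{acdc_cond:pr_freq}.

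The core is the DC coupling term. I would split $K^{-1}(s)=\bar K^{-1}(s)+\tilde K^{-1}(s)$, where $\bar K^{-1}$ is constant on each DC subnetwork equal to $\overline{k^{-1}_j}(s)$ and $\tilde K^{-1}$ collects the deviations $k_i^{-1}(s)-\overline{k^{-1}_j}(s)$, each of modulus at most $\Delta(s)$. Since $L_{\text{dc}}$ is block diagonal across subnetworks with blocks $L_{\text{dc}}^j\succeq0$, the averaged part gives $\Re\!\left(x^*L_{\text{dc}}\bar K^{-1}x\right)=\sum_j\Re\!\big(\overline{k^{-1}_j}(s)\big)\,(x^{(j)})^* L_{\text{dc}}^j x^{(j)}\ge0$ by Condition~\ref{cond:acdc_dc_network}.\ref{acdc_cond:avg_k_spr}, where $x^{(j)}$ is the restriction of $x$ to subnetwork $j$. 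Writing $w_j=(x^{(j)})^* L_{\text{dc}}^j x^{(j)}$ and $x_\perp^{(j)}$ for the component of $x^{(j)}$ orthogonal to $\ker L_{\text{dc}}^j$, I would bound the deviation term by Cauchy--Schwarz through the factorization $L_{\text{dc}}^{j}=\big(L_{\text{dc}}^{j}\big)^{1/2}\big(L_{\text{dc}}^{j}\big)^{1/2}$, obtaining $\big|\Re(x^*L_{\text{dc}}\tilde K^{-1}x)\big|\le\sum_j\sqrt{w_j}\,\sqrt{\lambda_{\text{dc}}^{\text{max}}}\,\Delta(s)\,\|x^{(j)}\|$, and then combining $\sqrt{w_j}\le\sqrt{\lambda_{\text{dc}}^{\text{max}}}\,\|x_\perp^{(j)}\|$ for the cross term with the lower bound $w_j\ge\lambda_{\text{dc,}j}^{\text{min}}\|x_\perp^{(j)}\|^2$ for the averaged part. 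Collecting the contributions of subnetwork $j$ and setting $p_j=\|x^{(j)}\|$, $r_j=\|x_\perp^{(j)}\|\le p_j$, the lower bound acquires the shape $\big(\min_i\xi_i\big)p_j^2+\Re(\overline{k^{-1}_j})\lambda_{\text{dc,}j}^{\text{min}}r_j^2-\lambda_{\text{dc}}^{\text{max}}\Delta\,r_j p_j$, a quadratic form that is positive definite precisely when $4\big(\min_i\xi_i\big)\Re(\overline{k^{-1}_j})\lambda_{\text{dc,}j}^{\text{min}}>(\lambda_{\text{dc}}^{\text{max}})^2\Delta^2$, i.e.\ (after dividing by $\lambda_{\text{dc}}^{\text{max}}$) exactly Condition~\ref{cond:acdc_dc_network}.\ref{acdc_cond:bounding_k_diff}. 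Summing over subnetworks and adding the nonnegative AC term and the nonnegative contributions of buses outside any DC subnetwork yields $\Re(x^*H^{-1}(s)x)\ge m(s)>0$ for every $s\in\overline{\C}_+\setminus0$; on $|s|\ge\delta$ continuity together with the behavior as $|s|\to\infty$ makes this bound uniform.

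It remains to secure a uniform lower bound on the neighborhood $S_\delta$ of the origin, where $\tfrac1s L_{\text{ac}}$ is singular and the individual $\xi_i(s)$ may vanish as $s\to0$ (e.g.\ for the machine integrator $g_{\text{SM,}k}^{-1}=J_k\omega_0 s$). The idea is to rescale, writing $H(s)=s\big(sG^{-1}K^{-1}+sL_{\text{dc}}K^{-1}+L_{\text{ac}}\big)^{-1}$, so that $L_{\text{ac}}$ now enters without the $1/s$ factor and is positive definite off $\ker L_{\text{ac}}$. On $\ker L_{\text{ac}}$ — the synchronous modes of the individual AC components — definiteness of the bracketed matrix is restored by $sL_{\text{dc}}K^{-1}$, which couples these components because the overall AC/DC network is connected, and, in the global $\mathbf 1$ direction, by the average damping $\tfrac1n\sum_i\xi_i(s)>0$ of Condition~\ref{cond:acdc_universal}.\ref{acdc_cond:avg_stab}. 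The $S_\delta$ hypotheses make this quantitative: Condition~\ref{cond:acdc_universal}.\ref{acdc_cond:stable_ss_g} and the constant $c_1$ of Condition~\ref{cond:acdc_dc_network}.\ref{acdc_cond:avg_k_spr} keep the averaged dynamics boundedly invertible, while $\Delta(s)\le c_2|s|$ of Condition~\ref{cond:acdc_dc_network}.\ref{acdc_cond:k_ss_conv} forces the DC deviation term to be $O(|s|)$ and hence unable to destroy invertibility near the origin. Combining the two regimes and using continuity of the finitely many rational transfer functions produces a single positive infimum of $m(s)$, hence $\sigma_{\text{max}}(H(s))\le\kappa$ throughout $\overline{\C}_+\setminus0$ and $H\in\mathscr{H}_\infty$; since $K^{-1}(s)$ is diagonal with $\mathscr{H}_\infty$-stable entries by Condition~\ref{cond:acdc_universal}.\ref{acdc_cond:stable_k}, submultiplicativity $\|K^{-1}H\|_\infty\le\|K^{-1}\|_\infty\|H\|_\infty$ then gives $K^{-1}H\in\mathscr{H}_\infty$.

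I expect the main obstacle to be the uniformity of the lower bound, i.e.\ converting the pointwise strict inequalities of Conditions~\ref{cond:acdc_universal} and~\ref{cond:acdc_dc_network} into one $\kappa$ valid over the whole unbounded closed right half plane. The matching of the completing-the-square estimate to Condition~\ref{cond:acdc_dc_network}.\ref{acdc_cond:bounding_k_diff} is clean once one routes the cross term through $r_j=\|x_\perp^{(j)}\|$ rather than $\sqrt{w_j}$, but stitching together the away-from-origin regime and the rescaled $S_\delta$ regime, and controlling the behavior as $|s|\to\infty$, is where the bookkeeping — and the precise role of each $S_\delta$ hypothesis — is most delicate.
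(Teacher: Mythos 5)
Your proposal follows essentially the same route as the paper's own proof: apply Lemma~\ref{lemma:real_part} to the decomposition $H^{-1}(s)=G^{-1}(s)K^{-1}(s)+L_{\text{dc}}K^{-1}(s)+\tfrac{1}{s}L_{\text{ac}}$, drop the nonnegative AC term, split $K^{-1}$ into per-subnetwork averages plus deviations bounded by $\Delta(s)$, and complete the square so that positive definiteness reduces exactly to Condition~\ref{cond:acdc_dc_network}.\ref{acdc_cond:bounding_k_diff}, with the $S_\delta$ hypotheses reserved for the degenerate behavior at $s=0$ and submultiplicativity handling $K^{-1}(s)H(s)$. Your per-subnetwork quadratic form and Cauchy--Schwarz bound are just a localized variant of the paper's global $2\times 2$ matrix $M(s)$ and Lemma~\ref{lemma:asym_eig_bound}, and your rescaling sketch near the origin plays the same (equally incomplete) role as the paper's appeal to a limiting stable function $\bar{H}(s)$.
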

A proof is given in Appendix~\ref{app:proofs}.

\subsection{Simplified Stability Conditions}
Condition~\ref{cond:acdc_dc_network} can be simplified if we restrict the controls for converters connected to DC networks such that for all $j \in \{1,\,\dots,\,n_{\text{dc}}\}$ and all $i \in \mathcal{N}_{\text{dc}}^j$, $k_i(s)$ can be expressed as 
\begin{align}
    k_i(s) = \alpha_j + s\beta_i, \label{eq:k_form}
\end{align} 
where  $\alpha_j,\, \beta_i \in \R_{> 0}$. This condition is sufficient to ensure Conditions~\ref{cond:acdc_dc_network}.\ref{acdc_cond:avg_k_spr} and~\ref{cond:acdc_dc_network}.\ref{acdc_cond:k_ss_conv} are satisfied. 

Requiring identical $\alpha_j$ for each bus in a DC subnetwork effectively requires that the steady state gain from DC voltage to AC frequency must be the same for each converter on every DC subnetwork. This ensures that, in steady state, there will not be circulating power flows between AC and DC subnetworks~\cite{energybalance}. 
Moreover, in this representation, the differences between $\beta_i$ for all $i \in \mathcal{N}_{\text{dc,}j}$ can be understood as a metric of the control coherency of the $j$-th DC subnetwork: to reduce $\Delta(s)$, we can tune the converter controls to reduce the difference between $\beta_i$ across each DC subnetwork.
In particular, if each $k_i(s)$ is the same on a given DC subnetwork, Condition~\ref{cond:acdc_dc_network}.\ref{acdc_cond:bounding_k_diff} will automatically be satisfied.

In summary, to improve the stability margin of the system, one can increase the passivity margin $\xi_i(s)$ of each bus in the system, increase the passivity margins of the coherent  dynamics $\bar{k}(s)$ for each DC subnetwork, increase the coherency of the different $k_i(s)$ for each DC subnetwork, or make the entire DC system less strongly connected. This supports recent observations in the literature that grid-forming converters are prone to instability under strong coupling~(see \cite{9929370,8625604} and the references therein).

\section{Application to Common Power Generation and Energy Conversion Technologies}\label{sec:applicationexample}
In this section, we demonstrate the utility of the derived conditions by applying them to a variety of common generation and energy conversion technologies. In particular, we show that the conditions can be used to establish stabililty of network configurations that are known to be stable. 
We note that the stability conditions can be divided into two categories: (i) conditions on each bus (Conditions~\ref{cond:acdc_universal}.\ref{acdc_cond:stable_k} --~\ref{cond:acdc_universal}.\ref{acdc_cond:pr_freq}), and (ii) conditions on the interconnected system (Condition~\ref{cond:acdc_universal}.\ref{acdc_cond:avg_stab} and Condition~\ref{cond:acdc_dc_network}). In this section, we illustrate the bus-level stability conditions for individual bus dynamics and then discuss the implications of the system-level stability conditions.

\subsection{Bus-Level Stability Conditions}
In this section, we examine how several common generation and energy conversion technologies can be represented in the hybrid system model and apply the bus-level conditions. 

\subsubsection{Synchronous Generator with Turbine-Governor}
A synchronous machine with a turbine-governor system and damper windings is described by 
\begin{align*}
    g(s) &= \frac{1 + \tau s}{J\omega_0\tau s^2 + J\omega_0s + k_{\text{g}}}, \\
    k(s) &= 1 + \gamma s.
\end{align*}
Note that $k^{-1}(s)$ is an $\mathscr{H}_\infty$ stable transfer function, $|g^{-1}(s)|$ converges to $k_g$ as $s \to 0$, and $g^{-1}(s)k^{-1}(s)$ has positive real part for all $s \in \overline{\C}_+$, so the bus-level conditions are satisfied.

\subsubsection{Synchronous Condenser}
Like the synchronous generator, the synchronous condenser has damper windings, so $k(s) = 1 + \gamma s$, satisfying Condition~\ref{cond:acdc_universal}.\ref{acdc_cond:stable_k}. However, a synchronous condenser has no turbine-governor system, so 
\begin{align*}
    g(s) = \frac{1}{J\omega_0s}.
\end{align*}
Here, $g^{-1}(s)$ converges to zero for $s \to 0$, satisfying Condition~\ref{cond:acdc_universal}.\ref{acdc_cond:stable_ss_g}. Notably, for the synchronous condenser, $\Re\left(g^{-1}(s)\right)$ will be zero when $\Re(s) = 0$. However, incorporating the damper winding dynamics, we see that $\Re\left(g^{-1}(s)k^{-1}(s)\right) > 0$ for all $s \in \overline{\C}_+\setminus 0$, satisfying Condition~\ref{cond:acdc_universal}.\ref{acdc_cond:pr_freq}. This indicates the benefit of the reduced-order damper winding model presented in Section~\ref{sec:dw} in demonstrating the stability of synchronous condensers for arbitrary AC network topologies. In contrast, for a three-bus swing equation model with one converter and two synchronous condensers that neglects the damper windings, there always exist line parameters so that the multi-machine swing equation model is unstable~\cite[Ex.~1]{powerbalance}.

\subsubsection{PV System with Dual-Port GFM Control}
Consider a single-stage PV system containing a two-level converter with dual-port GFM control directly connected to a PV array. The dual-port GFM control is represented by 
\begin{align*}
    k(s) = k_{\omega} + k_{p}s,
\end{align*} 
where $k_{\omega} \in \R_{>0}$ and $k_{p} \in \R_{>0}$ are proportional and derivative gains of the control~\cite{universaldualport}. Note that $k^{-1}(s)$ is $\mathscr{H}_\infty$ stable.

If the PV system is performing MPP tracking, $g(s) = 1/C_{\text{dc}}s.$ This case is analogous to the synchronous condenser, i.e., $g^{-1}(s)$ converges to zero for $s \to 0$ and $g^{-1}(s)k^{-1}(s)$ has positive real part for all $s \in \overline{\C}_+\setminus 0$. If the PV system operates at a voltage above its MPP voltage, then
\begin{align*}
    g(s) = \frac{1}{C_{\text{dc}}s + k_{\text{pv}}}.
\end{align*}
This is analogous to the synchronous generator with the turbine-governor system, i.e., $g^{-1}(s)$ converges to a finite value for $s \to 0$ and $g^{-1}(s)k^{-1}(s)$ has positive real part for all $s \in \overline{\C}_+$. For both PV operating modes, the bus level stability conditions are satisfied. 

\subsubsection{VSC-HVDC with Dual-Port GFM Control}
For an HVDC system, there is no direct generation response, so $g(s) = 1/{C_{\text{dc}}s}$. The dual-port GFM control is again represented by $k(s) = k_{\omega} + k_{p}s$, i.e., identical to the PV system at the MPP, and the bus-level conditions are satisfied.

\subsubsection{Battery Energy Storage with GFM VSM Control}\label{sec:droop_application}
Consider a battery energy storage system with GFM VSM control. Assume the battery is controlled so its output power responds proportionately to DC voltage deviations, i.e., $g_{\text{gen}}(s) = -k_{\text{batt}}$ for some finite $k_{\text{batt}} \in \R_{\geq 0}$. Then, the DC-link capacitor dynamics are given by
\begin{align*}
    g(s) = \frac{1}{C_{\text{dc}}s + k_{\text{batt}}}
\end{align*}
and $g^{-1}(s)$ will converge to a finite value for $s \in S_{\delta}$ for small enough $\delta$, satisfying Condition~\ref{cond:acdc_universal}.\ref{acdc_cond:stable_ss_g}. Moreover, the dynamics of GFM VSM control can be modeled as
\begin{align*}
    k(s) = \frac{m_p(C_{\text{dc}}s + k_{\text{batt}})}{Ts + 1},
\end{align*}%
to obtain $g(s)k(s) = m_p/(Ts + 1)$, where $T \in \R_{\geq 0}$ is proportional to the virtual inertia constant and $m_p \in \R_{>0}$ is the active power-frequency droop coefficient. GFM droop control can be modeled as a special case of VSM control with $T=0$~\cite{droop_schiffer_2013,equivalence_darco_2014}. Moreover the small-signal model of dVOC control is equivalent to droop control for transmission systems~\cite{seo_DVOC_2019}. The real part of $g^{-1}(s)k^{-1}(s)$ will always be positive for $s \in \overline{\C}_+$, so Condition~\ref{cond:acdc_universal}.\ref{acdc_cond:pr_freq} is satisfied. If $k_{\text{batt}} > 0$, $k^{-1}(s)$ will satisfy Condition~\ref{cond:acdc_universal}.\ref{acdc_cond:stable_k}. However, if $k_{\text{batt}} = 0$, $k^{-1}(s)$ will no longer be $\mathscr{H}_{\infty}$ stable. This reflects that AC GFM controls that neglect the DC voltage are unable to stabilize the DC voltage unless the converter is paired with a power source that stabilizes the DC voltage.

\subsubsection{VSC with GFL control}
Consider a VSC implementing grid-following control (using a synchronous reference frame phase-locked loop) with frequency droop. The dynamics from net power to AC frequency for this control can be modeled as \begin{align*}
    g(s)k(s) = \frac{s^2 + k_p s + k_i}{k_p s + k_i} \frac{\tau_d s + 1}{D},
\end{align*}
where $k_p, k_i \in \R_{>0}$ are the phase-locked loop gains, $D \in \R_{>0}$ is the frequency damping constant, and $\tau_d \in \R_{>0}$ is the time constant of a filter for measuring frequency~\cite[Eq. 7]{bui_input-output_2024}. This function cannot be passive and will have negative real part for some $s \in \overline{\C}_+$. Therefore, Condition~\ref{cond:acdc_universal}.\ref{acdc_cond:pr_freq} is not satisfied, and our stability conditions cannot certify the stability of a network with a GFL VSC.

\subsection{System-wide Stability Conditions}
\subsubsection{Stable Average Dynamics}
Condition~\ref{cond:acdc_universal}.\ref{acdc_cond:avg_stab} uses the harmonic mean model to ensure system-wide stability for the average mode of the transfer functions $g_i(s)k_i(s)$. Because Condition~\ref{cond:acdc_universal}.\ref{acdc_cond:pr_freq} already requires that each $g_i^{-1}(s)k_i^{-1}(s)$ has non-negative real part for all $s \in \overline{\C}_+\setminus 0$, Condition~\ref{cond:acdc_universal}.\ref{acdc_cond:avg_stab} can be understood as the additional requirement that the coherent dynamics are strictly passive at steady state. As seen from the previous examples, this corresponds to requiring that at least one bus on the system contains a resource that adjusts its power generation in response to imbalances in the AC frequency or DC voltage. If the system contains no DC networks, satisfying this condition and the bus-level conditions presented in the previous section suffices for proving stability. In the next section, we address the stability requirements for converters connected to a DC network.

\subsubsection{DC Network Stability}
In this section we apply Condition~\ref{cond:acdc_dc_network} to characterize the stability of different converter controls when connected to a DC network. First, we note that dual-port GFM control is of the form given in~\eqref{eq:k_form}. Therefore, if every converter on a given DC network is using dual-port GFM control with identical $k_i(0)$ for each converter, Conditions~\ref{cond:acdc_dc_network}.\ref{acdc_cond:avg_k_spr} and~\ref{cond:acdc_dc_network}.\ref{acdc_cond:k_ss_conv} will be satisfied. Additionally, a droop controlled VSC paired with a responsive resource satisfies~\eqref{eq:k_form} if $T = 0$. However, in general, VSM control with $T > 0$ does not satisfy Condition~\ref{cond:acdc_dc_network}.\ref{acdc_cond:k_ss_conv}. This reflects previous findings that applying AC GFM control that neglects the DC voltage to converters connecting AC and DC networks results in stability conditions that are highly sensitive to parameter and topology changes~\cite{singleportroleassignment,singleportunstable}.

Condition~\ref{cond:acdc_dc_network}.\ref{acdc_cond:bounding_k_diff} requires that the transfer functions $k_i(s)$ do not significantly deviate from their average within each DC subnetwork, which corresponds to frequency coherency for the converters. For a more tightly connected network (i.e., large $\lambda_{\text{dc}}^{\text{max}}$), increased coherency is required to ensure stability. For a DC network with all converters using dual-port GFM control, this corresponds to decreasing the difference between derivative control gains of the converters.

\section{Conclusion}\label{sec:conclusion}
In this article, we developed small-signal stability conditions for hybrid AC/DC power systems that allow us to consider common power conversion, power generation, and power transmission technologies. First we reviewed common models for AC and DC transmission, machine-interfaced generation (e.g., turbine-governor systems, wind turbines) and converter-interfaced generation (e.g., solar PV) and transmission (i.e., VSC-HVDC). 
Next, we developed a novel reduced-order damper winding model that accurately captures the synchronizing effect of damper windings in multi-machine systems. The damper winding model is validated by comparing the small-signal model to an EMT simulation of the IEEE 9-bus system. 
The main contribution of this work is a compact frequency domain representation of hybrid AC/DC systems and associated stability conditions that can be divided into bus-level and system-level conditions. We illustrated that the bus-level conditions apply to a wide range of technologies (e.g., synchronous generators, synchronous condensers, grid-forming renewables and energy storage). 
Moreover, the system-level conditions establish that hybrid AC/DC systems combining a wide range of devices are stable independently of the network topology provided that the frequency response of generators and converters is sufficiently coherent relative to the network coupling strength. 

The small-signal analysis used in this work does not fully reflect the non-linear power system dynamics. While the properties of transmission systems make linearization justified, our analysis does not account for the nonlinear dynamics of grid-connected power converters near their limits (e.g., current and voltage limits). Methods such as the Popov criterion~\cite{PM2019} or integral quadratic constraints~\cite{Lessard_IQC_2016} can be used to extend our analysis and account for nonlinearities. Our analysis also neglects AC voltage magnitude dynamics. While this is typically justified for transmission systems, incorporating AC voltage dynamics to analyze distribution systems is seen as an interesting topic of future work. Finally, our analysis also neglects the dynamics of the converter inner loops and line dynamics. While singular perturbation arguments are commonly used to justify using reduced-order models that neglect inner loops and line dynamics for AC systems, future work should investigate timescale separation for hybrid AC/DC networks. Other interesting topics for future work include in-depth validation of the proposed damper winding model and explicitly incorporating transmission line dynamics into the stability analysis framework. 

\appendix
\subsection{Laplacian matrices of AC and DC networks}\label{app:lap}
The entries $\tilde{L}_{\text{ac},ij}$ of $\tilde{L}_{\text{ac}}$ are given by
\begin{align*}
    \tilde{L}_{\text{ac},ij} = \left\{\begin{array}{lr}
    -b_{ij}, & (i,j) \in \mathcal{E}_{\text{ac}}, i \neq j \\
    \sum_{(i,k) \in \mathcal{E}_{\text{ac}}} b_{ik}, & i = j \\
    0, & (i,j) \notin \mathcal{E}_{\text{ac}}, i \neq j
    \end{array} \right. .
\end{align*}
The entries $\tilde{L}_{\text{dc},ij}$ of $\tilde{L}_{\text{dc}}$ are given by
\begin{align*}
    \tilde{L}_{\text{dc},ij} = \left\{\begin{array}{lr}
    -g_{ij}, & (i,j) \in \mathcal{E}_{\text{dc}}, i \neq j \\
    \sum_{(i,k) \in \mathcal{E}_{\text{dc}}} g_{ik}, & i = j \\
    0, & (i,j) \notin \mathcal{E}_{\text{dc}}, i \neq j
    \end{array} \right. .
\end{align*}

\subsection{Parameters of the IEEE 9-bus system}\label{app:ieee9}
{\fontfamily{ptm}\selectfont
    \begin{table}[h!!!]
        \caption{Parameters of the IEEE 9-bus system.\label{Table}}
        \begin{minipage}[c]{\linewidth}
        \centering
        \hspace{-5mm}
        \scalebox{0.74}{
            {\renewcommand{\arraystretch}{1.3}
                \begin{tabular}[]{|c|c||c|c||c|c|}
                    \hline
                    \rowcolor{light-gray}
                    \multicolumn{6}{|c|}{Base values}\\
                    \hline
                    $S_\text{b}$ & $100$ MVA & $v_\text{b}$ & $230$ kV& $\omega_\text{b}$ & $2\pi50$ rad/s\\
                    \hline
                    \rowcolor{light-gray}
                    \multicolumn{6}{|c|}{{MV/HV transformer}}
                    \\ \hline
                    $S_\text{r}$ & $210$ MVA & $v_1$ & $13.8$ kV& $v_2$ & $230$ kV\\\hline
                    $R_1=R_2$ & $0.0027$ p.u. & $L_1=L_2$ & $0.08$ p.u.& $R_\text{m}=L_\text{m}$ & $500$ p.u.\\\hline
                    \rowcolor{light-gray}
                    \multicolumn{6}{|c|}{{synchronous machine (SM)}}\\ \hline
                    $S_\text{r}$ & $100$ MVA & $v_\text{r}$ & $13.8$ kV& $H$ & $3.7$ s \\\hline
                    $k_g$ & $20$ p.u. & $\tau$ & $3$ s & $X$ & $0.114$ p.u. \\\hline
                    $X_d^{\prime}$ & $0.104$ p.u. & $X_d^{\prime\prime}$ & $3.43\times 10^{-4}$ p.u. & $T_d^{\prime\prime}$ & $9.36 \times 10^{-5}$ p.u. \\ \hline
                    $X_q^{\prime}$ & $0.360$ p.u. & $X_q^{\prime\prime}$ & $7.24\times 10^{-4}$ p.u. & $T_q^{\prime\prime}$ & $1.17 \times 10^{-4}$ p.u.\\ \hline
        \end{tabular}}}
        \end{minipage}
\end{table}}

\subsection{Preliminary Results}

\begin{lemma}\label{lemma:asym_eig_bound}
    For a positive semidefinite matrix $M \in \R^{n\times n}$ and a diagonal matrix $D = \diag\{d_i\}_{i=1}^n \in \C^{n \times n}$,
    \begin{align*}\Re(y^* MDx) \leq \|x\|\|y\|\lambda^{\textup{max}}_{M}\left(\max_i |d_i|\right) \qquad \forall x,\, y \in \C^n,\end{align*}
    where $\lambda^{\textup{max}}_{M}$ is the largest eigenvalue of $M$.
\end{lemma}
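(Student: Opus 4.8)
The plan is to bound the real part by the modulus and then apply Cauchy--Schwarz together with submultiplicativity of the spectral norm. First I would observe that $\Re(y^* MDx) \le |y^* MDx|$, and view $y^* MDx = \langle y, MDx\rangle$ as an inner product, so that Cauchy--Schwarz gives $|y^* MDx| \le \|y\|\,\|MDx\|$. It then suffices to bound $\|MDx\|$ by $\lambda^{\textup{max}}_M\,(\max_i|d_i|)\,\|x\|$.

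Next I would use submultiplicativity of the induced Euclidean (spectral) norm to write $\|MDx\| \le \|M\|_2\,\|D\|_2\,\|x\|$, where $\|\cdot\|_2$ denotes the operator norm. Since $M$ is real symmetric positive semidefinite, its spectral norm equals its largest eigenvalue, i.e. $\|M\|_2 = \lambda^{\textup{max}}_M$. Since $D = \diag\{d_i\}$ is diagonal, its singular values are precisely the $|d_i|$, so $\|D\|_2 = \max_i |d_i|$. Chaining these inequalities gives $\Re(y^* MDx) \le \|x\|\,\|y\|\,\lambda^{\textup{max}}_M\,(\max_i |d_i|)$, which is exactly the claim.

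This argument is essentially routine and I do not expect a genuine obstacle. The only two points that require a moment of care are that the positive semidefiniteness (and symmetry) of $M$ is exactly what makes its spectral norm \emph{coincide} with $\lambda^{\textup{max}}_M$ rather than merely be bounded by it, and that the diagonal structure of $D$ makes $\max_i |d_i|$ its spectral norm even though the entries $d_i$ are complex. Both facts are standard, so the proof reduces to assembling the inequality chain above.
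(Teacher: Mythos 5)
Your proof is correct and follows essentially the same route as the paper's: the paper likewise bounds $\Re(y^*MDx)$ by its modulus, applies Cauchy--Schwarz and submultiplicativity to get $\|y\|\|M\|\|D\|\|x\|$, and identifies $\|M\|$ with $\lambda^{\textup{max}}_M$ and $\|D\|$ with $\max_i|d_i|$. Your version merely spells out the justification for those last two identifications, which the paper leaves implicit.
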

\begin{proof}
    \begin{align*}
        \Re(y^* MDx) &\leq \|y^* MDx\| \\ &\leq \|y\|\|M\| \|D\|\|x\| \\ &= \|x\|\lambda^{\text{max}}_{M}\left(\max_i |d_i|\right)\|y\|.
    \end{align*}
\end{proof}
\subsection{Proof of the Main Results}\label{app:proofs}
\emph{Proof of Lemma~\ref{lemma:real_part}.}
    The condition $\sigma_{\text{max}}\left(H(s)\right) < \kappa$ is equivalent to $\sigma_{\text{min}}\left(H^{-1}(s)\right) > \frac{1}{\kappa}$.
    Note that by~\cite[Prop. III.5.1]{bhatia1996matrix}, \[\sigma_{\text{min}}\left(H^{-1}(s)\right) \geq  \lambda_{\text{min}}\left(\frac{1}{2}\left(H^{-1}(s) + (H^{-1}(s))^*\right)\right).\]
    By definition,
    \begin{align*}
        \lambda_{\text{min}}&\left(\frac{1}{2}\left(H^{-1}(s) + (H^{-1}(s))^*\right)\right)\\ 
        &=\min_{\substack{x\in \C^n \\\|x\| = 1}} x^*\left(\frac{1}{2}(H^{-1}(s) + (H^{-1}(s))^*)\right)x \\
        &= \min_{\substack{x\in \C^n \\\|x\| = 1}} \frac{1}{2}\left(x^*H^{-1}(s)x + x^*(H^{-1}(s))^*x\right) \\ 
        &= \min_{\substack{x\in \C^n \\\|x\| = 1}}\Re\left(x^*H^{-1}(s)x\right).
    \end{align*}
    Therefore, to prove $\sigma_{\text{max}}\left(H(s)\right) < \kappa$, it is sufficient that $\Re\left(x^*H^{-1}(s)x\right) > \frac{1}{\kappa}$ holds for all $x\in \C^n$ such that $ \|x\| = 1$. \qed{}

\emph{Proof of Theorem~\ref{thm:acdc_stability}.}
    By Condition~\ref{cond:acdc_universal}.\ref{acdc_cond:stable_k}, establishing stability of $H(s)$ (i.e., \eqref{eq:y_tf}) is sufficient for proving  stability of $K^{-1}(s)H(s)$. To this end, we define the diagonal matrix $\overline{K^{-1}}(s)$ with $\overline{K^{-1}_{ii}}(s) = \overline{k^{-1}_j}(s)$ if $i \in \mathcal{N}_{\text{dc}}^j$, and $\overline{K^{-1}_{ii}}(s) = k_i^{-1}(s)$ if node $i$ is a purely AC node. Moreover, we define $K^{\Delta}(s) = K^{-1}(s) - \overline{K^{-1}}(s)$.
    Decomposing $L_{\text{dc}}$ into the Laplacian of each DC subnetwork, we obtain \[L_{\text{dc}}\overline{K^{-1}}(s) = \sum\nolimits_{j = 1}^{n_{\text{net}}^{\text{dc}}} \overline{k_j^{-1}}(s) L_{\text{dc}}^j.\]
    Next, consider an arbitrary $x \in \C^n$ where $\|x\| = 1$. Then 
    \begin{align}\label{eq:acdc_re_x}
        \Re\left(x^*\left(G^{-1}(s)K^{-1}(s) + L_{\text{dc}}K^{-1}(s) + \frac{1}{s}L_{\text{ac}}\right)x \right)
    \end{align}
    \begin{figure*}[t!!]
        \begin{align}
            \begin{bmatrix}
                \|x\| \\ \|x\|_{\text{N}\left(L_{\text{dc}}\right)} 
            \end{bmatrix}^*
            \underbrace{\begin{bmatrix}
                \min\limits_{i \in \mathcal{N}} \xi_i(s) & -\frac{1}{2}\lambda_{\text{dc}}^{\text{max}}\Delta(s) \\
                -\frac{1}{2}\lambda_{\text{dc}}^{\text{max}}\Delta(s) & \min\limits_{j \in \left\{1,\,\dots,\,n_{\text{net}}^{\text{dc}}\right\}} \Re \left(\overline{k^{-1}_j}(s)\right) \lambda_{\text{dc,}j}^{\text{min}}
            \end{bmatrix}}_{\coloneqq M(s)}
            \begin{bmatrix}
                \|x\| \\ \|x\|_{\text{N}\left(L_{\text{dc}}\right)} 
            \end{bmatrix}\label{eq:matrixform}
        \end{align}
        \hrule       
    \end{figure*}
    can be rewritten as 
    \begin{multline*}
        \Re\left(x^*G^{-1}(s)K^{-1}(s)x\right) + \Re\left(\frac{1}{s}x^*L_{\text{ac}}x\right) + \\ \sum_{j = 1}^{n_{\text{net}}^{\text{dc}}} \Re\left(\overline{k_j^{-1}}(s) x^*L_{\text{dc}}^j x\right) + \Re\left(x^*L_{\text{ac}}K^{\Delta}(s)x\right).
    \end{multline*}
    This expression can be lower bounded by lower bounding each term. To this end, let $N(M)$ denote the nullspace of a matrix $M$, and let $\|x\|_{V}$ denote the distance from $x$ to a subspace $V$. Moreover, we define $x_{V}^{\perp}$ as the component of $x$ orthogonal to $V$.
%
    First, 
    \begin{align*}
        \Re(x^*G^{-1}K^{-1}(s)x) &\geq \min_{i \in \mathcal{N}}\Re\left(g_i^{-1}(s)k_i^{-1}(s)\right)\|x\|^2 \\
        &\geq \min_{i \in \mathcal{N}} \xi_i(s)\|x\|^2.
    \end{align*}
    Second, because $L_{\text{ac}}$ is a positive semidefinite real matrix, \begin{align*}\Re\left(\frac{1}{s}x^*L_{\text{ac}}x\right) \geq \Re\left(\frac{1}{s}\right)\lambda_{\text{ac}}^{\text{min}}\|x\|_{\text{N}\left(L_{\text{ac}}\right)}^2,
    \end{align*}
    where $\lambda_{\text{ac}}^{\text{min}} > 0$. For $s \in \overline{\C}_+$, $\Re\left(\frac{1}{s}\right) \geq 0$. Thus, we obtain \[\Re\left(\frac{1}{s}x^*L_{\text{ac}}x\right) \geq 0.\]
    Third, because each $L_{\text{dc}}^j$ is real, it follows that
    \begin{align*}\Re\left(\overline{k_j^{-1}}(s) x^*L_{\text{dc}}^j x\right) = \Re\left(\overline{k_j^{-1}}(s)\right) x^*L_{\text{dc}}^j x.\end{align*} Using the same approach used for the AC case results in
    \begin{align*}
        \sum_{j = 1}^{n_{\text{net}}^{\text{dc}}} \Re\left(\overline{k_j^{-1}}(s) x^*L_{\text{dc}}^j x\right)  &\geq \min_{j \in \left\{1,\,\dots,\,n_{\text{net}}^{\text{dc}}\right\}} \Re \left(\overline{k^{-1}_j}(s)\right) \lambda_{\text{dc,}j}^{\text{min}}\|x\|_{\text{N}\left(L_{\text{dc}}\right)}^2.
    \end{align*}
    Finally, note that \[\Re\left(x^*L_{\text{dc}}K^{\Delta}(s)x\right) = \Re\left({\left(x_{\text{N}(L_{\text{dc}})}^{\perp}\right)}^*L_{\text{dc}}K^{\Delta}(s)x\right).\] Therefore, by Lemma~\ref{lemma:asym_eig_bound}, it follows that
    \begin{align*}
        \Re&\left({\left(x_{\text{N}(L_{\text{dc}})}^{\perp}\right)}^*L_{\text{dc}}K^{\Delta}(s)x\right) \geq -\lambda_{\text{dc}}^{\text{max}}\Delta(s)\|x\|_{\text{N}\left(L_{\text{dc}}\right)}\|x\|.
    \end{align*}

    It follows that~\eqref{eq:acdc_re_x} can be lower bounded by~\eqref{eq:matrixform}. Moreover,~\eqref{eq:matrixform} is positive if $M(s)$ is positive definite. 
    Using the Schur complement and substituting the above equation, $M(s)$ is positive definite if and only if $\min\limits_{j \in \left\{1,\,\dots,\,n_{\text{net}}^{\text{dc}}\right\}} \Re \left(\overline{k^{-1}_j}(s)\right) \lambda_{\text{dc,}j}^{\text{min}}> 0$, $\min_{i\in \mathcal{N}}\xi_i(s) > 0$, and \[
        4\left(\min_{i\in \mathcal{N}}\xi_i(s)\right)\left(\min_{j \in \left\{1,\,\dots,\,n_{\text{net}}^{\text{dc}}\right\}} \Re \left(\overline{k^{-1}_j}(s)\right) \lambda_{\text{dc,}j}^{\text{min}}\right) > \left(\lambda_{\text{dc}}^{\text{max}} \Delta(s)\right)^2.
    \]%
    The first inequality is satisfied by Condition~\ref{cond:acdc_dc_network}.\ref{acdc_cond:avg_k_spr} and by the fact that $\lambda_{\text{dc,}j}^{\text{min}}> 0$ by construction. 
    The second inequality is satisfied by Condition~\ref{cond:acdc_universal}.\ref{acdc_cond:pr_freq}. The third inequality is satisfied by
    Condition~\ref{cond:acdc_dc_network}.\ref{acdc_cond:bounding_k_diff} for all $s \in \overline{\C}_+\setminus 0$. Therefore, for all $s \in \overline{\C}_+ \setminus 0$, the expression in~\eqref{eq:acdc_re_x} is positive for any $x \in \C^n$. By Lemma~\ref{lemma:real_part}, this guarantees that the maximum singular value of~\eqref{eq:y_tf} is finite for all $s \in \overline{\C}_+ \setminus 0$.
    
    To show the AC frequency and DC voltage of the system are $\mathscr{H}_{\infty}$ stable, we also demonstrate that $\sigma_{\text{max}}\left(H(s)\right)$ is bounded for $s \to 0$. Let there exist some $\delta > 0$ such that for all $s \in S_{\delta}$, $|g_i^{-1}(s)|$ is bounded for $i \in \mathcal{N}$, $\frac{1}{n}\sum_{i \in \mathcal{N}} \xi_i(s) > 0$, $\Re(\overline{k_j^{-1}}(s)) \geq d \in \R_{> 0}$ for all $j \in \left\{1,\,\dots,\,n_{\text{net}}^{\text{dc}}\right\}$, and $\Delta(s)$ converges linearly to zero as $s \to 0$. Then, using methods similar to~\cite[Thm. 3]{min_frequency_2025}, it can be shown that there exists some positive $\delta^\prime < \delta$ such that for all $s \in S_{\delta^\prime}$, $\lim_{s \to 0} \| H(s) - \bar{H}(s)\| = 0$ for some stable function $\bar{H}(s)$. Therefore, by Conditions~\ref{cond:acdc_universal}.\ref{acdc_cond:stable_ss_g}, \ref{cond:acdc_universal}.\ref{acdc_cond:avg_stab}, \ref{cond:acdc_dc_network}.\ref{acdc_cond:avg_k_spr}, and \ref{cond:acdc_dc_network}.\ref{acdc_cond:bounding_k_diff}, the system will converge to a stable steady state. 
    \qed{}

\bibliographystyle{IEEEtran}
\bibliography{IEEEabrv,ref}

\end{document}